\newcommand{\ew}{\mathcal{E}}%
\newcommand{\cw}{\mathcal{C}}%
\newcommand{\bb}{\partial M}%
\newcommand{\mm}{\mathcal{M}}%
\newcommand{\mSE}{\tilde{\mathcal{S}}_E}%
\newcommand{\NN}{\mathcal{N}}%
\newcounter{Counter}
\newtheorem{definition}{Definition}[section]
\newtheorem{lemma}[definition]{Lemma}
\newtheorem{conjecture}[definition]{Conjecture}
\newtheorem{theorem}[definition]{Theorem}
\newtheorem{corollary}[definition]{Corollary}
\newtheorem{example}[definition]{Example}
\newtheorem{remark}[definition]{Remark}
\newtheorem{assumption}{Assumption}
\title{\boldmath A Proof of the Generalized Connected Wedge Theorem}
\author{Bowen Zhao}
\affiliation{Beijing Institute of Mathematical Sciences and Applications, Beijing, China}
\emailAdd{bowenzhao@bimsa.cn}
\abstract{
In the context of asymptotic $2$-to-$2$ scattering process in AdS/CFT, the Connected Wedge Theorem identifies the existence of $O(1/G_N)$ mutual information between suitable boundary subregions, referred to as decision regions, as a necessary but not sufficient condition for bulk-only scattering processes, i.e., nonempty bulk scattering region $\mathcal{S}_0$. Recently, Liu and Leutheusser proposed an enlarged bulk scattering region $\mathcal{S}_E$ and conjectured that the non-emptiness of $\mathcal{S}_E$ fully characterizes the existence of $O(1/G_N)$ mutual information between decision regions. Here, we provide a geometrical or general relativity proof for a slightly modified version of their conjecture.}
\keywords{AdS-CFT Correspondence, Classical Theories of Gravity, Super-additivity}
\begin{document}
\maketitle
\flushbottom

\section{Introduction}\label{sec:intro}
The AdS/CFT correspondence posits a duality between a quantum gravity theory in an asymptotically Anti-de Sitter (AdS) spacetime $M$ and a conformal field theory (CFT) on its timelike boundary $\bb$ \cite{maldacena1999AdSCFT,witten1998AdScft}. A foundational requirement for this duality is the consistency of causal structure between the bulk and the boundary.

\subsection{Two-Point Causality Constraints and the Gao-Wald Theorem}

The first layer of this consistency is encoded in the theorem of Gao and Wald \cite{gao2000theorems}. It states that bulk causality cannot violate boundary causality. Specifically, assuming that $M$ satisfies the null curvature condition\footnote{This is equivalent to the null energy condition $T_{ab}k^ak^b \geq 0$ when the Einstein equation is imposed.}
\begin{equation}\label{eq:NEC}
    R_{ab}k^a k^b \geq 0,  \quad \text{ for all null } k^a
\end{equation}
and that the conformal completion $\overline{M}$ is globally hyperbolic\footnote{The original statement in \cite{gao2000theorems} assumes that $\overline{M}$ is strongly causal and that all causal diamonds in $\overline{M}$ are compact. As discussed in Chapter 8 of ref. \cite{waldGR}, this set of conditions is equivalent to the definition of global hyperbolicity based on the existence of a Cauchy surface. We also refer to this condition as $M$ is AdS-hyperbolic.},
then any two boundary points $p,q\in \bb$ that can be connected by a causal curve through the bulk must also be connectable by a causal curve restricted entirely to the boundary $\bb$. A stronger result follows upon also assuming the null generic condition that every null geodesic with tangent $k^a$ encounters a point where
\begin{equation}\label{eq:NGC}
k_{[a}R_{b]cd[e}k_{f]}k^c k^d \neq 0.
\end{equation}
This condition ensures null geodesics experience non-trivial curvature, thereby excluding pure AdS. In this context, any \emph{prompt} \footnote{A null geodesic $l$ is called prompt if for every pair of points $p, q \in l$, either $p \in \partial J^+[q]$ or $q \in \partial J^+[p]$. This term captures that no causal curve connects $p$ and $q$ more quickly than the segment of $l$ itself, making $l$ an optimal signaling path. For a detailed discussion, see \cite{witten2020light}.} causal curve between two boundary points must lie entirely on the boundary $\bb$ \cite{witten2020light}.

\subsection{The Connected Wedge Theorem: Beyond Two-Point Causality}
A more profound consistency requirement emerges when considering asymptotic quantum tasks involving multiple regions on $\partial M$. 
In $(d+1)$-dimensional asymptotically AdS spacetime $M$, asymptotic $2$-to-$d$ scattering configurations with $2$ input points and $d$ output points on $\partial M$ may permit local scattering processes in the bulk that have no counterpart in the boundary theory. Following established terminology, we refer to such configurations as \emph{bulk-only scattering configurations}.

Such configurations were initially investigated in refs. \cite{gary2009bulkonly,heemskerk2009bulkonly,penedones2011bulkonly,maldacena2017bulkpointsingularity} through the study of perturbative singularities in Lorentzian correlators of boundary local operators. May \cite{may2019quantumtask} re-examined these configurations from a quantum information perspective and conjectured that the existence of bulk-only scattering configurations necessarily implies substantial mutual information between specific boundary regions.

\begin{figure}
    \centering
    \includegraphics[width=0.6\linewidth,trim={3cm 12cm 2cm 5cm},clip]{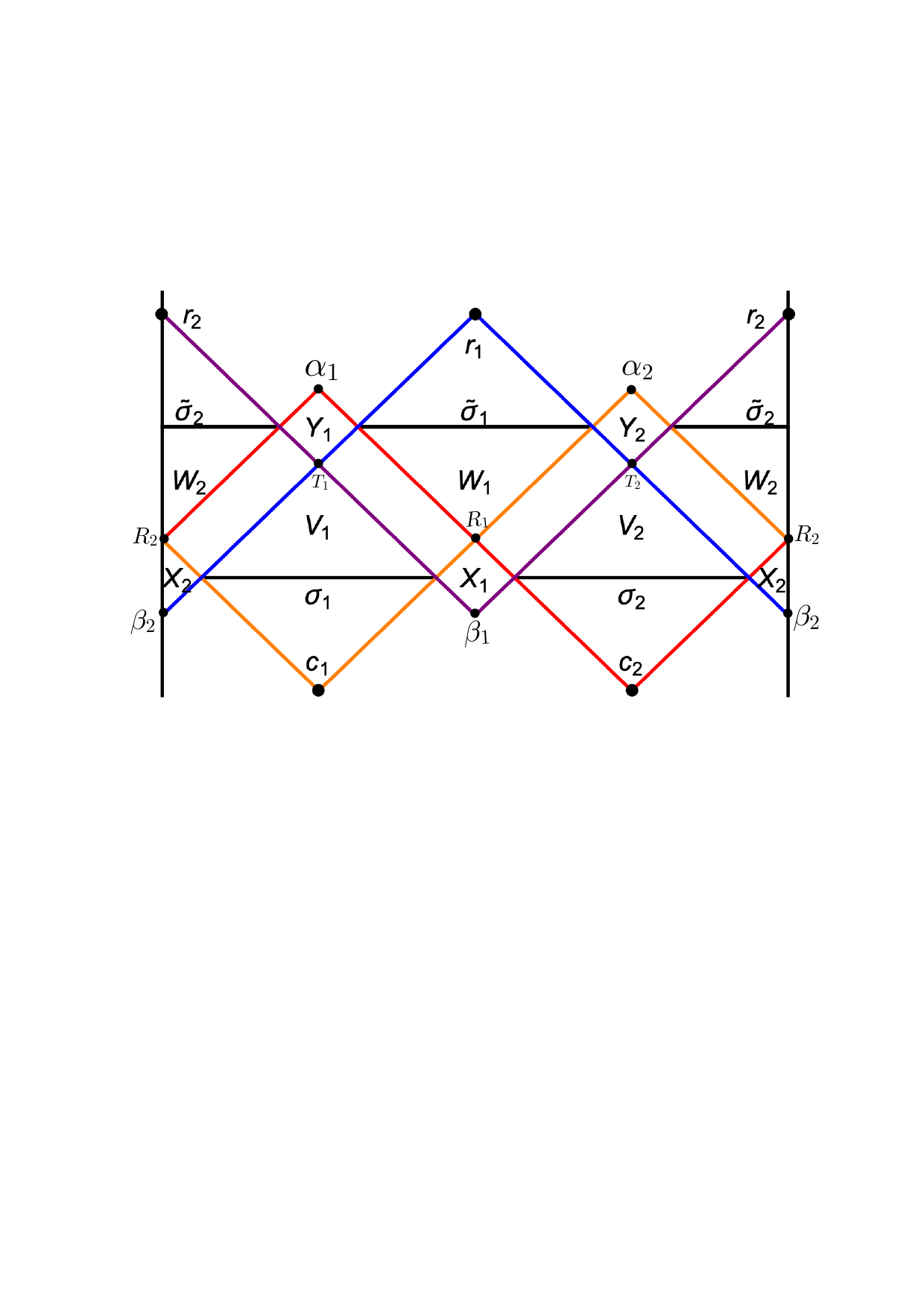}
    \caption{Boundary set-up of $2$-to-$2$ scattering process. This figure is adjusted from \cite{LL2025superadditivity}.}
    \label{fig:setup}
\end{figure}

The boundary setup for an asymptotic $2$-to-$2$ scattering configuration is illustrated in Figure \ref{fig:setup}. Inputs, which can be classical bits or quantum bits (qubits), are received at points $c_1$ and $c_2$, while outputs are required at both $r_1$ and $r_2$.
The input/decision regions $V_1$ and $V_2$, and output regions $W_1$ and $W_2$, are defined as follows:
\begin{align}\label{eq:V_W}
V_1 &= \hat{J}^+[c_1] \cap \hat{J}^-[r_1] \cap \hat{J}^-[r_2], \nonumber \\
V_2 &= \hat{J}^+[c_2] \cap \hat{J}^-[r_1] \cap \hat{J}^-[r_2], \nonumber \\
W_1 &= \hat{J}^-[r_1] \cap \hat{J}^+[c_1] \cap \hat{J}^+[c_2], \nonumber \\
W_2 &= \hat{J}^-[r_2] \cap \hat{J}^+[c_1] \cap \hat{J}^+[c_2],
\end{align}
where $\hat{J}^\pm(c_1)$ denotes the causal future/past of point $c_1$ on the boundary $\partial M$, and similarly for other points.

For bulk-only scattering configurations, the boundary scattering region
\begin{equation}\label{eq:Shat}
\hat{\mathcal{S}} = \hat{J}^+[c_1] \cap \hat{J}^+[c_2] \cap \hat{J}^-[r_1] \cap \hat{J}^-[r_2]
\end{equation}
is empty, implying that the intersection of any two of the four regions $V_1, V_2, W_1, W_2$ must be empty.
Let ${J}^\pm[c_1]$ denote the causal future/past of point $c_1$ in the bulk $M$. The bulk scattering region
\begin{equation}\label{eq:S0}
\mathcal{S}_0 = J^+[c_1] \cap J^+[c_2] \cap J^-[r_1] \cap J^-[r_2]
\end{equation}
is, by definition, non-empty for bulk-only scattering configurations. 

The Connected Wedge Theorem (CWT) \cite{maypenington2020holographic} establishes that for $2$-to-$2$ bulk-only scattering in AdS$_3$, the boundary regions $V_1$ and $V_2$ must share $O(1/G_N)$ mutual information:
 $$I(V_1:V_2) = S(V_1) + S(V_2) -S(V_1\cup V_2) \sim O(1/G_N)$$
This implies that a local bulk scattering process without a boundary counterpart necessitates nonlocal boundary protocols.

The large mutual information admits a direct geometric interpretation through the Hubeny-Rangamani-Ryu-Takayanagi (HRRT) formula, $S(A) = \min_{\gamma \sim A} \left(\frac{|\gamma|}{4G_N}\right) + O(1)$. The $O(1/G_N)$ mutual information indicates that the extremal surface computing $S(V_1\cup V_2)$ is not the union of the individual extremal surfaces for $V_1$ and $V_2$. Consequently, the entanglement wedge $\ew(V_1\cup V_2)$ becomes connected. This leads to the geometric formulation of the CWT:

\begin{theorem}\label{thm:CWT}
Assume that the bulk spacetime $M$ is AdS-hyperbolic\footnote{Recall that this means the conformal completion $\overline{M}$ is globally hyperbolic.} and satisfies the null energy condition \eqref{eq:NEC}, and that the HRRT surface can be found via a maximin procedure \cite{wall2014maximin}. Then $\mathcal{S}_0 \neq \emptyset$ implies that the entanglement wedge of $V_1 \cup V_2$ is connected.
\end{theorem}

The converse, however, does not hold universally; a connected entanglement wedge does not guarantee a non-empty bulk scattering region, as demonstrated by an explicit counterexample in \cite{maypenington2020holographic}.

\subsection{Algebraic Reformulation and the Generalized Connected Wedge Theorem}
A recent work by Liu and Leutheusser \cite{LL2025superadditivity} reinterprets bulk-only scattering configurations in terms of operator algebras, leading to a conjectured generalization of the CWT.

They identify emergent superadditivity in the large-$N$ boundary algebra as the key feature enabling non-local boundary protocols, contrasting with the additive nature of additive finite-$N$ algebra. While superadditivity in standard quantum error-correcting codes (involving finite-dimensional Type I algebras) refers to properties of encoding maps, the large-$N$ boundary algebras in AdS/CFT are Type III$_1$ factors where superadditivity becomes an intrinsic algebraic property. Although the boundary CFT functions as an error-correcting code for the bulk \cite{almheiri2015bulk}, finite-dimensional models appear insufficient to fully capture entanglement wedge reconstruction (see Section III of \cite{LL2025superadditivity}).

For von Neumann algebras $\mm_{V_1}$ and $\mm_{V_2}$ of boundary regions $V_1$ and $V_2$
\begin{itemize}
    \item Additivity would mean $\mm_{V_1} \vee \mm_{V_2} = (\mm_{V_1} \cup \mm_{V_2})'' =\mm_{V_1\cup V_2} $
    \item Superadditivity states $\mm_{V_1} \vee \mm_{V_2} = (\mm_{V_1} \cup \mm_{V_2})'' \subseteq \mm_{V_1\cup V_2} $
\end{itemize}
This algebraic property finds a geometric dual through entanglement wedge reconstruction $\mm_{V} = \mm^{(\text{bulk})}_{\ew(V)}$, where $\mm^{(\text{bulk})}$ indicates bulk operator algebra and $\ew(V)$ is the entanglement wedge of $V$.
Assuming additivity holds in the bulk, i.e., $\mm^{(\text{bulk})}_{\ew(V_1)}\vee \mm^{(\text{bulk})}_{\ew(V_2)}=\mm^{(\text{bulk})}_{\ew(V_1)\cup\ew(V_2)}$, superadditivity of the boundary algebra is equivalent to an inclusion of entanglement wedges:
 $$\ew(V_1)\cup \ew(V_2) \subseteq \ew(V_1\cup V_2).$$

This inclusion is the algebraic signature of the geometric phase transition in which the HRRT surface jumps and the entanglement wedge becomes connected. Liu and Leutheusser identify additional operators in $\mathcal{M}_{\ew(V_1 \cup V_2)}$ — those not contained in the algebra generated by $\mathcal{M}_{\ew(V_1)}$ and $\mathcal{M}_{\ew(V_2)}$ as the fundamental resources that facilitate the non-local boundary protocols required to simulate bulk-local tasks. Their conjecture is as follows:
\begin{conjecture}
Define the generalized scattering region
\begin{equation}
\mathcal{S}_E = \ew(V_1 \cup V_2) \cap \ew(W_1 \cup W_2).
\end{equation}
Under assumptions analogous to those in Theorem \ref{thm:CWT}\footnote{Precise assumptions are specified in Section \ref{subsec:notation_assumptions}.}, $\mathcal{S}_E\neq \emptyset$ if and only if $\ew(V_1 \cup V_2)$ is connected.
\end{conjecture}

We proved a modified version of the above conjecture, establishing that connectedness of both $\ew(V_1 \cup V_2)$ and $\ew(W_1 \cup W_2)$ implies $\mathcal{S}_E \neq \emptyset$, thereby identifying the geometric resource required for quantum tasks. Our work provided the first proof of this non-trivial direction.

An independent study by \cite{lima2025sufficientGCWT} subsequently introduced the alternative bulk region 
\begin{equation}
    \tilde{\mathcal{S}}_E=\tilde{S}_E =[\ew(V_1\cup V_2) \setminus (\ew(V_1)\cup \ew(V_2))] \, \bigcap \, [\ew(W_1\cup W_2) \setminus (\ew(W_1)\cup \ew(W_2))]
\end{equation}
and established the full equivalence: non-emptiness of $\tilde{\mathcal{S}}_E$ occurs if and only if both entanglement wedges are connected, with the converse direction following directly from the region's construction.

In this updated version, we demonstrate that both $\mathcal{S}_E$ and $\tilde{\mathcal{S}}_E$ are nonempty when the entanglement wedges are connected. Our analysis provides a comprehensive geometric understanding of these scattering regions through rigorous causal and entanglement wedge arguments. A key contribution is the identification of the causal anchoring principle (Section \ref{subsec:proof_strategy}) as a fundamental mechanism in asymptotic scattering problems, which underlies our proof strategy and offers clear geometric insight.

Our main results are Theorems \ref{thm:SE_wedge} and \ref{thm:GCWT}, which we summarize as follows:
\begin{theorem}
Assume conditions analogous to those in Theorem \ref{thm:CWT} \footnote{Precise assumptions are specified in Section \ref{subsec:notation_assumptions}.}.
\begin{itemize}
\item If both $\ew(V_1\cup V_2)$ and $\ew(W_1\cup W_2)$ are connected then $\mathcal{S}_E \neq \emptyset$.
\item The region $\tilde{\mathcal{S}}_E$ is nonempty if and only if $\ew(V_1 \cup V_2)$ and $\ew(W_1 \cup W_2)$ are both connected.
\end{itemize}
\end{theorem}

As noted in \cite{maypenington2020holographic}, bulk-only $2$-to-$2$ scattering is generally absent in dimensions greater than three for an asymptotically global AdS spacetime but could exist in asymptotically locally AdS spacetimes (the asymptotic boundary is topologically different from that of $AdS$, e.g. AdS soliton\cite{horowitzmyers}). 
We here focus on $2$-to-$2$ scattering in asymptotically $\text{AdS}_3$ spacetimes, which are dual to CFTs on the cylindrical boundary $\mathbb{R} \times S^1$ and leave discussion about $n$-to-$n$ ($n>2$) and higher dimensions to future work. 

We lastly mention that while the focus here is to enlarge the bulk scattering region $\mathcal{S}_0$ to achieve an equivalence statement for existence of $(1/G_N)$ mutual information on the boundary, other studies \cite{caminiti2025geodesics} attempt to find a stricter condition than the entanglement wedges being connected to achieve an equivalence statement for non-emptiness of $\mathcal{S}_0$.

The rest of this paper is organized as follows. In Section \ref{subsec:review}, we review geometric features of the $2$-to-$2$ scattering setup. In section \ref{subsec:proof_strategy} we establish the causal anchoring principle in asymptotic scattering problems and state our proof strategy. In Section \ref{subsec:S0_SE_inc}, we prove the inclusion $\mathcal{S}_0 \subseteq \mathcal{S}_E$ to establish our proof strategy. In Section \ref{subsec:GCWT}, we first show that connectedness of the entanglement wedges implies $\mathcal{S}_E \neq \emptyset$, and then observe that the proof can be easily extended to show $\tilde{\mathcal{S}}_E \neq \emptyset$. We then conclude with the equivalence between $\tilde{\mathcal{S}}_E \neq \emptyset$ and connectedness of entanglement wedges and some remarks about the relation between $\tilde{\mathcal{S}}_E $ and $\mathcal{S}_E$.  We discuss implications and future directions in Section \ref{section:discussion}. The appendix contains some basic facts that are used in the main text.

\subsection{Notations and Assumptions}\label{subsec:notation_assumptions}
Here we summarize the notations, conventions, and assumptions used throughout this paper.

We adopt natural units with $\hbar=c=1$ and set the AdS length scale $l_{\text{AdS}}=1$, while keeping Newton's constant $G_N$ explicit. Our notation follows ref. \cite{waldGR}, using the mostly-plus metric signature.

\begin{itemize}
\item \textbf{Spacetime regions:} Bulk regions are denoted by script letters ($\mathcal{U}, \mathcal{V}, \mathcal{W}, \cdots$), while boundary regions use straight capitals ($U, V, W, \cdots$). The same symbol may denote either a causal diamond or its Cauchy surface, with the meaning clear from context.

\item \textbf{Cauchy slices:} Bulk Cauchy slices are denoted by $\Sigma$ with appropriate subscripts, boundary Cauchy slices by $\hat{\Sigma}$ with subscripts. By abuse of notation, $\Sigma$ may also refer to Cauchy slices of the conformally compactified spacetime.

\item \textbf{Causal structure:} The bulk causal future/past of region $\mathcal{V}$ is $J^\pm[\mathcal{V}]$; for boundary region $V$, we write $J^\pm[V]$ for bulk causal influence and $\hat{J}^\pm[V]$ for boundary causal influence.

\item \textbf{Domains of dependence:} The bulk domain of dependence of $\mathcal{V}$ is $\mathcal{D}[\mathcal{V}]$; the boundary domain of dependence of $V$ is $\hat{D}[V]$.

\item \textbf{Entanglement structures:} For boundary region $V$, we denote the entanglement wedge by $\ew(V)$, causal wedge by $\cw(V)$, and HRRT surface by $\text{RT}(V)$.

\item \textbf{Complements:} The causal complement (bulk or boundary) uses prime notation ($'$), while set-theoretic complement within a Cauchy slice uses superscript $c$.
\end{itemize}

\begin{assumption}\label{assumption:1}
We assume throughout that:
\begin{enumerate}
\item The bulk spacetime $M$ satisfies the null curvature condition \eqref{eq:NEC};
\item HRRT surfaces can be found via a maximin procedure;
\item The spacetime is AdS-hyperbolic (the conformal compactification $\overline{M}=M \cup \partial M$ admits a Cauchy slice);
\item The spacetime region between some Cauchy slice preceding $\ew(V_1\cup V_2)$ and some Cauchy slice following $\ew(W_1\cup W_2)$ is singularity-free.
\item The global boundary state is pure, ensuring that a boundary region $V$ and its causal complement $V'$ share the same HRRT surface.
\end{enumerate}
\end{assumption}

\section{Geometric Proof Using General Relativity}
\subsection{Geometric Structure of 2-to-2 Bulk-Only Scattering}\label{subsec:review}
We start by reviewing some facts about $2$-to-$2$ bulk-only scattering configurations in $2+1$ dimension and collecting some relevant theorems.

Recall that the input/decision regions $V_1, V_2$ and output regions $W_1, W_2$, as defined in \eqref{eq:V_W}, are all non-empty sets on $\bb$. Since the intersection of any two of the four regions would imply a non-empty boundary scattering region $\hat{S}$, defined in \eqref{eq:Shat}, we restrict our analysis to the non-trivial case where no two such regions intersect. 
This condition implies that the causal complement of $V_1 \cup V_2$ (and similarly of $W_1 \cup W_2$) is non-empty and, crucially in $2+1$ dimensions, necessarily disconnected.
We label the connected components of the causal complement of $V_1 \cup V_2$ as $X_1$ and $X_2$, and those of $W_1 \cup W_2$ as $Y_1$ and $Y_2$, with each $X_i$ lying in the causal past of $r_i$ and each $Y_i$ in the causal future of $c_i$, as illustrated in Figure \ref{fig:setup}. For later use, we note that each $Y_i$ (for $i=1,2$) lies to the future of the corresponding $V_i$, with the two sharing a boundary point labeled $T_i$ in Figure \ref{fig:setup}. Similarly, each $X_i$ lies to the past of the corresponding $W_i$, with the two sharing a boundary point $R_i$ (Figure \ref{fig:setup}).

We denote by $\alpha_1$ the future antipodal point of $c_2$ on $\partial M$, and by $\alpha_2$ the future antipodal point of $c_1$ on $\partial M$. In other words, the two future-directed null geodesics emanating from $c_1$ converge at $\alpha_2$, while those from $c_2$ converge at $\alpha_1$.
Similarly, we denote by $\beta_1$ the past antipodal point of $r_2$ on $\partial M$, and by $\beta_2$ the past antipodal point of $r_1$ on $\partial M$. That is, the two past-directed null geodesics from $r_1$ converge at $\beta_2$, while those from $r_2$ converge at $\beta_1$.

Since $X_i$ and $Y_j$ are defined as the causal complements of $V_i$ and $W_j$, respectively, we obtain (see Figure \ref{fig:setup}):
\begin{align}\label{eq:alpha_beta}
    X_1&=\hat{J}^+[\beta_1] \cap \hat{J}^-[\alpha_1] \cap \hat{J}^-[\alpha_2] , \nonumber \\
    X_2&=\hat{J}^+[\beta_2] \cap \hat{J}^-[\alpha_1] \cap \hat{J}^-[\alpha_2], \nonumber  \\
    Y_1&=\hat{J}^-[\alpha_1] \cap \hat{J}^+[\beta_1] \cap \hat{J}^-[\beta_2], \nonumber  \\
    Y_2&=\hat{J}^-[\alpha_2] \cap \hat{J}^+[\beta_1] \cap \hat{J}^-[\beta_2] .
\end{align}
We will make extensive use of the following key boundary relations:
\begin{align}
\partial \hat{J}^+[\beta_1] = \partial \hat{J}^-[r_2], \quad
\partial \hat{J}^+[\beta_2] = \partial \hat{J}^-[r_1], \quad
\partial \hat{J}^-[\alpha_1] = \partial \hat{J}^+[c_2], \quad
\partial \hat{J}^-[\alpha_2] = \partial \hat{J}^+[c_1].
\end{align}
\label{eq:relation_X_Y_c_r}
In contrast, the bulk analog $\partial J^+[\beta_1] = \partial J^-[r_2]$ generally fails. Matter or curvature in the bulk retards causal curves relative to pure AdS, causing the null hypersurfaces $\partial J^+[\beta_1]$ and $\partial J^-[r_2]$ to separate—specifically, $\partial J^-[r_2]$ lies to the past of $\partial J^+[\beta_1]$. Consequently, future-directed bulk causal curves from $\beta_1$ reach the boundary only at points strictly to the future of $r_2$.\footnote{This behavior is precisely characterized by the Gao-Wald theorem. Under the null generic condition \eqref{eq:NGC}, the boundary null geodesic connecting $\beta_1$ and $r_2$ is prompt.}

This causal separation creates a \emph{causal shadow}: a region within the entanglement wedge $\ew(V)$ that lies outside $J^+[\hat{{D}}(V)] \cap J^-[\hat{{D}}(V)]$ and thus lacks causal contact with the boundary domain of dependence $\hat{{D}}[V]$.

Now consider a proper subset $V$ of a boundary Cauchy slice $\hat{\Sigma}$. For a pure global state, $V$ and its complement $V^c$ share the same HRRT surface\footnote{Recall that we use the same notation for a causal domain and its Cauchy surface.}. Let $\Sigma$ be a bulk Cauchy surface bounded by $\hat{\Sigma}$ that contains this common HRRT surface. The causal wedges $\cw(V)$ and $\cw(V^c)$ intersect $\Sigma$ in regions on opposite sides of this shared HRRT surface, indicating the presence of a causal shadow (which may be empty).

\subsection{Proof Strategy via Causal Anchoring Principle}\label{subsec:proof_strategy}
The Gao-Wald Theorem relates bulk and boundary causal structures for boundary points and trivially extends to a boundary compact set. One might expect similar relations for general compact sets $\mathcal{U} \subset M$ with bulk components:
\begin{equation}\label{eq:bulk_Gao_wald}
J^\pm[\mathcal{U}] \cap \partial M = \hat{J}^\pm[\mathcal{U} \cap \partial M].
\end{equation}
This fails for simple cases like bulk points and general compact sets with both bulk and boundary components. However, it holds for homology regions of boundary subsets $V$ (\ref{eq:RT_bb}). This strongly constrains bulk geometry in asymptotic scattering problems: null sheets emanating from HRRT surfaces and causal surfaces for boundary region $V$ are constrained by $V$'s boundary geometry.

Let us recall several useful lemmas and theorems from \cite{headrick2014causality}.
\begin{lemma}
Let $\bb$ denote the timelike boundary of an asymptotically global AdS spacetime $M$. Let $p\in \bb$ be an arbitrary point, $V$ be a spacelike region in a Cauchy slice $\hat{\Sigma}$ of $\bb$ and $\mathcal{V}$ be a proper subset of a Cauchy slice $\Sigma$ of $M$. Then,
\begin{align}
    J^{\pm}[p]\cap \bb &= \hat{J}^{\pm}[p],\label{eq:point_J} \\
    \ew(V)\cap \bb &= \hat{D}(V) \label{eq:ew_bb}, \\
    J^{\pm}[RT(V)]\cap \bb &= \hat{J}^{\pm}[\partial V] \label{eq:RT_bb},\\
    M &= D(\mathcal{V})\cup D(\mathcal{V}^c) \cup J^+[\partial \mathcal{V}] \cup J^-[\partial \mathcal{V}] \label{eq:D_J},
\end{align}
where $\mathcal{V}^c$ denotes the set complement of $\mathcal{V}$ on the Cauchy slice $\Sigma$ of $M$ and $\partial \mathcal{V}$ denotes the boundary of $\mathcal{V}$ in the bulk $M$.
\end{lemma}

Equation \eqref{eq:point_J} expresses the Gao-Wald Theorem: for a boundary causal domain $V = \hat{J}^-[p] \cap \hat{J}^+[q]$, the bulk causal wedge is $J^+[p] \cap J^-[q]$. Taking $c_1$ as an example, both the causal surface of $V_1$ and that of $W_1\cup Y_1\cup W_2$ lie on the null sheet $\partial J^+[c_1]$, equaling its intersection with appropriate bulk Cauchy slices.

Equations \eqref{eq:ew_bb} and \eqref{eq:RT_bb} generalize the Gao-Wald Theorem to homology regions. Specifically, null sheets emanating from HRRT surfaces (e.g., $RT(V_1)$) are anchored at $\hat{J}^\pm[\partial V]$ on $\partial M$. The same is true for null sheets emanating from causal surfaces, due to the causal wedge-entanglement wedge inclusion.

In an asymptotically global AdS spacetimes, matter/curvature distorts bulk null sheets $\mathcal{N}$ relative to their pure AdS counterparts $\mathcal{N}'$, but their boundary restrictions $\mathcal{N}\cap \partial M$ coincide. Our proof strategy therefore uses boundary null rays from relevant points to constrain the bulk geometry of entanglement wedges and causal wedges.

While \cite{LL2025superadditivity} proved their conjecture in pure $AdS_3$, matter/curvature in general asymptotically global AdS requires modifying the conjecture. Both the Gao-Wald Theorem and results from \cite{headrick2014causality} stem from causality, demonstrating that boundary causal consistency strongly constrains bulk geometry.

We end this subsection with a projection that will be used later.
\begin{remark}\label{rmk:project_global_hyperbolic}
We utilize the AdS-hyperbolicity of $M$ to identify points across different Cauchy slices of $M$ or its conformal compactification $\overline{M}$.

Recall that the AdS-hyperbolicity of $M$, or equivalently the global hyperbolicity of $\overline{M}$, implies that $\overline{M}$ has the topology $\Sigma \times \mathbb{R}$, where $\Sigma$ is a Cauchy surface of $\overline{M}$. As established in Theorems 8.3.14 and 8.2.2 of \cite{waldGR}, global hyperbolicity ensures the existence of a global time function $t$ (though highly non-unique). Each level set of $t$ is a Cauchy surface $\Sigma_t$, and the gradient $\nabla t$ defines a global timelike vector field. By projecting along the integral curves of $\nabla t$, we can map all spacetime points to a fixed Cauchy slice $\Sigma_{t_0}$.
\end{remark}

\subsection{Bulk Scattering region Contained in Entanglement Wedge}\label{subsec:S0_SE_inc}

We begin by providing a more careful proof of a lemma that was initially established in \cite{maypenington2020holographic} and further discussed in \cite{LL2025superadditivity}. The subtlety identified in Lemma \ref{lemma:A1}(2) proves advantageous for our proof here but precludes any conclusion that $\mathcal{S}_E = \tilde{\mathcal{S}}_E$.
\begin{figure}
    \centering
    \includegraphics[width=0.6\linewidth,trim={5 4cm 5 7},clip]{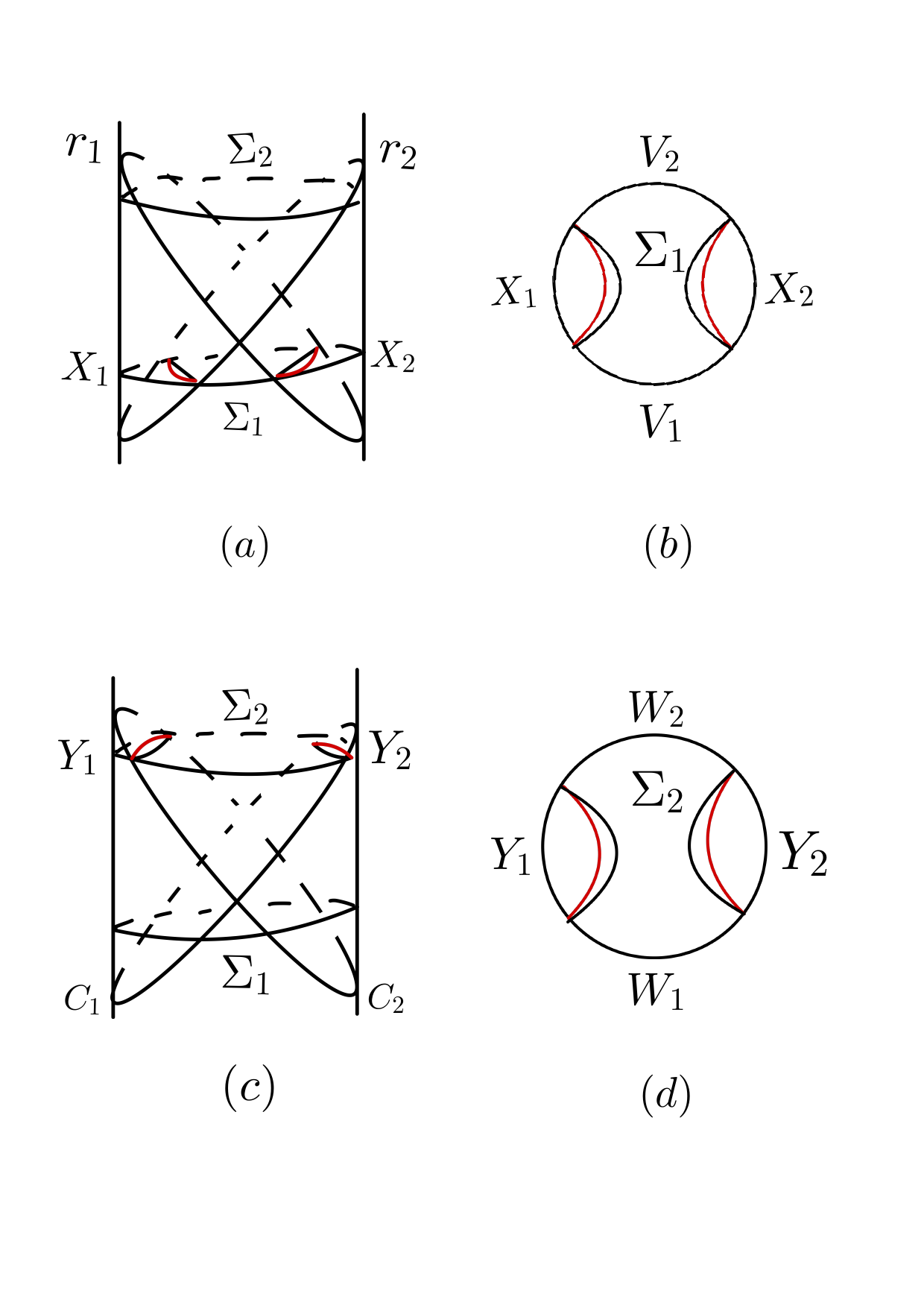}
    \caption{Schematic illustrating $\mathcal{S}_0 \subseteq \mathcal{S}_E$ when both $\ew(V_1 \cup V_2)$ and $\ew(W_1 \cup W_2)$ are connected. Panel $(a)$ depicts $\partial J^-[r_1]$ and $\partial J^-[r_2]$ and their intersections with the Cauchy slice $\Sigma_1$. Panel $(b)$ shows $\Sigma_1$ with the causal surface of $V_1 \cup X_i \cup V_2$ (i.e., $J^-[r_i] \cap \Sigma_1$) in black and the HRRT surface of $X_i$ in red. Panel $(c)$ displays $\partial J^+[c_1]$ and $\partial J^+[c_2]$ and their intersections with $\Sigma_2$. Panel $(d)$ presents $\Sigma_2$ with the causal surface of $W_1 \cup Y_i \cup W_2$ (i.e., $J^+[c_i] \cap \Sigma_2$) in black and the HRRT surface of $Y_i$ in red. Note that panels (c) and (d) are rotated relative to panels (a) and (b).}
    \label{fig:S0_SE}
\end{figure}

\begin{lemma}\label{lemma:S0_SE} 
Assume the conditions in Assumption \ref{assumption:1}.
    Define the bulk scattering region as
    $$\mathcal{S}_0 = J^+ [c_1] \cap J^+ [c_2] \cap J^- [r_1] \cap J^-[r_2] $$
    and the generalized scattering region\footnote{There is another bulk scattering region $S=J^+[\ew(V_1)]\cap J^+[\ew(V_2)]\cap J^-[\ew(W_1)] \cap J^-[\ew(W_2)]$ defined in \cite{may2021region}. Since the focus of this work is about $\mathcal{S}_E$, we do not discuss $S$ explicitly.} as 
    $$\mathcal{S}_E = \ew(V_1 \cup V_2) \cap  \ew(W_1 \cup W_2).$$ 
    Then $\mathcal{S}_0 \subseteq \mathcal{S}_E$.
\end{lemma}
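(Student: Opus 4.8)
The plan is to prove the two inclusions $S_0\subset\ew(V_1\cup V_2)$ and $S_0\subset\ew(W_1\cup W_2)$ separately. The defining formulas are invariant under reversing the time orientation together with the interchange $c_i\leftrightarrow r_i$: this sends $V_a\mapsto W_a$ and $X_i\mapsto Y_i$ while fixing $S_0$, and it carries the relations \eqref{eq:relation_X_Y_c_r} into one another. It therefore suffices to establish $S_0\subset\ew(V_1\cup V_2)$, the other inclusion following verbatim in the time-reversed setup. I will work in the connected phase of Figure~\ref{fig:S0_SE} (in the disconnected phase the statement is vacuous or immediate), and I will use repeatedly the single structural fact that $S_0\subset J^-(r_1)\cap J^-(r_2)$.

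Fix $x\in S_0$. Since the global state is pure, any bulk Cauchy slice $\Sigma$ carrying the common surface $RT(V_1\cup V_2)=RT(X_1\cup X_2)$ is split by it into the homology regions of $\ew(V_1\cup V_2)$ and of $\ew(X_1\cup X_2)$; by \eqref{eq:D_J} every point of $M$ lying outside both wedges must sit in $J^+[RT(V_1\cup V_2)]\cup J^-[RT(V_1\cup V_2)]$. The strategy is thus: (i) exhibit a Cauchy slice $\Sigma_1$ through $x$ that contains $RT(V_1\cup V_2)$; and (ii) show $x\notin\ew(X_1\cup X_2)$. In the connected phase the complementary wedge is disconnected, $\ew(X_1\cup X_2)=\ew(X_1)\sqcup\ew(X_2)$, so (ii) amounts to $x\notin\ew(X_1)$ and $x\notin\ew(X_2)$. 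Granting (i) and (ii), the point $x\in\Sigma_1$ lies off $RT(V_1\cup V_2)$ and outside the $X$-side, hence in $\ew(V_1\cup V_2)\cap\Sigma_1\subset\ew(V_1\cup V_2)$.

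For (ii) I introduce the enlarged regions $Z_i:=V_1\cup V_2\cup X_i$. Since $V_1,V_2\subset\hat{J}^-(r_1)\cap\hat{J}^-(r_2)$ and, by the naming convention, $X_i\subset\hat{J}^-(r_i)$ but $X_i\not\subset\hat{J}^-(r_{\bar i})$, each $Z_i$ is a boundary causal diamond with future tip $r_i$; by \eqref{eq:relation_X_Y_c_r} and Remark~\ref{rmk:wedge_fact} its causal wedge is $\cw(Z_i)=J^+(p_i)\cap J^-(r_i)$ for a suitable past tip $p_i$, so that $\cw(Z_i)\cap\Sigma_1=J^-(r_i)\cap\Sigma_1$, exactly the causal surface drawn in Figure~\ref{fig:S0_SE}. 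Purity gives $(Z_i)^c=X_{\bar i}$ and hence $\ew(Z_i)=\ew(X_{\bar i})^{c}$; combined with $\cw(Z_i)\subset\ew(Z_i)$ this yields $\cw(Z_i)\cap\ew(X_{\bar i})=\emptyset$. As $x\in J^-(r_1)\cap J^-(r_2)$, for each $i$ it lies in $J^-(r_i)\cap\Sigma_1=\cw(Z_i)\cap\Sigma_1$ and is therefore excluded from $\ew(X_{\bar i})$; letting $i$ range over $\{1,2\}$ gives $x\notin\ew(X_1)$ and $x\notin\ew(X_2)$, which is (ii).

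The main obstacle is (i): that $x$ is spacelike to $RT(V_1\cup V_2)$, equivalently $S_0\cap\bigl(J^+[RT(V_1\cup V_2)]\cup J^-[RT(V_1\cup V_2)]\bigr)=\emptyset$. By the identity $J^{\pm}[RT(A)]\cap\bb=\hat{J}^{\pm}(\partial A)$ of the lemma recalled above, these causal regions are anchored at the corner points $\partial(V_1\cup V_2)$, so the difficulty is entirely in the interior, where the focusing of null geodesics is precisely the effect that makes $\ew$ larger than $\cw$. I expect to dispatch it with a maximin/null-energy argument: realizing $RT(V_1\cup V_2)$ on a maximin slice, a future-directed causal curve from it to a point of $J^-(r_1)\cap J^-(r_2)\cap J^+(c_1)\cap J^+(c_2)$ would push the surface into the causal past of both outputs, contradicting its anchoring and extremality. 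Once this focusing step is in place, the remainder is routine bookkeeping with the boundary causal relations \eqref{eq:relation_X_Y_c_r}.
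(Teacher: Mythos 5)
Your architecture is genuinely different from the paper's off the slice (the paper bounds $\ew(V_1\cup V_2)$ from inside by causal wedges on $\Sigma_1$, propagates upward with the null-sheet Lemma~\ref{lemma:A1}, and confines $S_0$ between $\Sigma_1$ and $\Sigma_2$ by a nesting argument), and your slice-level step (ii) matches the paper's use of $\cw(Z_i)\subset\ew(Z_i)$. But the proposal has a genuine gap exactly at the step you yourself flag as the main obstacle, and the sketch you offer for it would not close it. You need $S_0\cap\bigl(J^+[RT(V_1\cup V_2)]\cup J^-[RT(V_1\cup V_2)]\bigr)=\emptyset$, and you propose to derive a contradiction from a piece of $RT(V_1\cup V_2)$ landing ``in the causal past of both outputs, contradicting its anchoring and extremality.'' However, the maximin/causal-shadow property constrains the surface of $V_1\cup V_2$ only against $\hat{D}(V_1)\cup\hat{D}(V_2)$ and $\hat{D}(X_1)\cup\hat{D}(X_2)$; the points $r_1,r_2,c_1,c_2$ lie in neither domain, so causal contact between $RT(V_1\cup V_2)$ and those points is not, by itself, in tension with extremality or anchoring. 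Indeed $RT(X_2)\subset J^-(r_2)$ actually holds in the symmetric vacuum configuration, so no argument phrased in terms of the union surface versus ``both outputs'' can work; the exclusion must be organized piece by piece.

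The missing argument can be supplied from the paper's own toolkit, by the same complementary re-splitting you already use in step (ii). In the connected phase $RT(V_1\cup V_2)=RT(X_1)\cup RT(X_2)$, and by purity $RT(X_2)=RT(Z_1)$, the HRRT surface of the boundary diamond $\hat{D}(Z_1)$ whose future tip is $r_1$ and whose closure contains $c_1,c_2$; the causal-shadow theorem of \cite{headrick2014causality} (an HRRT surface is spacelike separated from the domain of dependence of its own region) then gives $RT(X_2)\cap\bigl(J^-(r_1)\cup J^+(c_1)\cup J^+(c_2)\bigr)=\emptyset$, and symmetrically $RT(X_1)\cap\bigl(J^-(r_2)\cup J^+(c_1)\cup J^+(c_2)\bigr)=\emptyset$. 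Since $J^-(x)\subset J^-(r_1)\cap J^-(r_2)$ and $J^+(x)\subset J^+(c_1)\cap J^+(c_2)$ for $x\in S_0$, this yields your (i). Two secondary points: as written, your step (ii) requires $x\in\Sigma_1$ and hence presupposes (i) (it can be made slice-free: $\ew(X_{\bar i})$ is causally disconnected from $\ew(Z_i)$, whose closure contains $r_i$, so $J^-(r_i)\cap\ew(X_{\bar i})=\emptyset$); and the identity $\cw(Z_i)\cap\Sigma_1=J^-(r_i)\cap\Sigma_1$, which the paper also asserts, is justified for the slice of Figure~\ref{fig:S0_SE}, not automatically for a slice bent so as to pass through an arbitrary $x$. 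Until (i) is actually proven, the proposal is not a proof: all of the null-energy/focusing content of the lemma lives in that step.
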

\begin{proof}

If $\mathcal{S}_0 = \emptyset$, then the statement is trivial. Let us consider only $\mathcal{S}_0 \neq \emptyset$. Then by the original Connected Wedge Theorem, $\ew(V_1 \cup V_2)$ and $\ew(W_1 \cup W_2)$ are both connected.

    For the purpose of explanation, let us pick $\hat{\Sigma}_1$ to be a boundary Cauchy slice of $V_1 \cup V_2 \cup X_1 \cup X_2$ and $\Sigma_1$ be a corresponding bulk Cauchy slice containing relevant HRRT surfaces and causal surfaces\footnote{This is always possible because 1) a HRRT surface always lies in the causal shadow and hence, if different, is spacelike separated from relevant causal surfaces \cite{headrick2014causality}; 2) HRRT surfaces of disjoint, spacelike-separated boundary regions can be minimal on the same Cauchy slice \cite{wall2014maximin}.}. Similarly, let $\hat{\Sigma}_2$ be a boundary Cauchy slice of $W_1 \cup W_2 \cup Y_1 \cup Y_2$ and $\Sigma_2$ be a corresponding bulk Cauchy surface that contains relevant HRRT surfaces and causal surfaces.

    Let us first show that above $\Sigma_1$ the following is true:
    \begin{equation}\label{eq:S_SE_1}
        J^-[r_1]\cap J^-[r_2]\cap J^+[\Sigma_1] \subseteq \ew(V_1\cup V_2) \cap J^+[\Sigma_1].
    \end{equation} 
    On the boundary $\bb$ and above $\hat{\Sigma}_1$, 
    \begin{equation}
        \hat{J}^-[r_1] \cap \hat{J}^+[\hat{\Sigma}_1] = \hat{D}(V_1 \cup X_1\cup V_2) \cap \hat{J}^+[\hat{\Sigma}_1].
    \end{equation}
    As noted in Section \ref{subsec:proof_strategy}, by definition of causal wedge and \eqref{eq:point_J}, it follows that in the bulk
    \begin{equation}
        J^-[r_1] \cap J^+[\Sigma_1] = \cw(V_1 \cup X_1 \cup V_2) \cap J^+[\Sigma_1]
    \end{equation}
    and 
    \begin{equation}\label{eq:S0_SE_a}
        J^-[r_1] \cap \Sigma_1 = \cw(V_1\cup X_1 \cup V_2)\cap \Sigma_1 \subseteq \ew (V_1 \cup X_1 \cup V_2)\cap \Sigma_1.
    \end{equation}
    Similarly, by considering $r_2$ and $X_2$, we  have
    \begin{equation}\label{eq:S0_SE_b}
        J^-[r_2] \cap \Sigma_1 = \cw(V_1\cup X_2 \cup V_2)\cap \Sigma_1 \subseteq \ew (V_1 \cup X_2 \cup V_2)\cap \Sigma_1.
    \end{equation}
    Since $\ew(V_1\cup V_2)$ is connected, a portion of $RT(V_1\cup V_2)$ is shared with $RT(V_1\cup X_1\cup V_2)$ while another portion of $RT(V_1\cup V_2)$ is shared with $RT(V_1\cup X_2\cup V_2)$. In other words,
    \begin{equation}
        \ew (V_1 \cup X_2 \cup V_2) \cap \ew (V_1 \cup X_2 \cup V_2)\cap \Sigma_1 = \ew (V_1 \cup V_2)\cap \Sigma_1.
    \end{equation}
    Therefore, taking the intersection of \eqref{eq:S0_SE_a} and \eqref{eq:S0_SE_b}, we get
    \begin{equation}\label{eq:S0_SE_c}
        J^-[r_1]\cap J^-[r_2] \cap \Sigma_1 \subseteq \ew(V_1\cup V_2) \cap \Sigma_1.
    \end{equation}
    This is schematically depicted in Fig.\ \ref{fig:S0_SE}(a,b).
    
    As shown in Lemma \ref{lemma:A1}, future-pointing and inward-pointing (i.e. pointing toward the bulk) null sheets emanating from $\partial J^-[r_1]\cap \Sigma_1$ and $\partial J^-[r_2]\cap \Sigma_1$ (black curves in Figure\ \ref{fig:S0_SE}(a,b)), if different from, lie to the \textit{future} of $\partial J^-[r_1]$ and $\partial J^-[r_2]$.
    As noted in Section \ref{subsec:proof_strategy}, the future horizon of $\ew(V_1\cup V_2)$ consists of future-pointing and inward-pointing null sheets emanating from $RT(V_1\cup V_2)$, which in turn lie to the \textit{future} of future-pointing and inward-pointing null sheets emanating from $\partial J^-[r_1] \cap \Sigma_1$ and $\partial J^-[r_2] \cap \Sigma_1$ because of \eqref{eq:S0_SE_c}.
    Combining these facts, we get \eqref{eq:S_SE_1}.

    The reasoning for 
    \begin{equation}\label{eq:S_SE_2}
        J^+[c_1] \cap J^+[c_2] \cap J^-[\Sigma_2] \subseteq \ew(W_1 \cup W_2) \cap J^-[\Sigma_2] 
    \end{equation} 
    is exactly the same.
    On the boundary $\bb$ and below $\hat{\Sigma}_2$, 
    \begin{equation}
        \hat{J}^+[c_1] \cap \hat{J}^-[\hat{\Sigma}_2] = \hat{D}(W_1 \cup Y_1\cup W_2) \cap \hat{J}^-[\hat{\Sigma}_2].
    \end{equation}
    As noted in Section \ref{subsec:proof_strategy}, by definition of causal wedge and \eqref{eq:point_J}, it follows that in the bulk
    \begin{equation}\label{eq:S0_SE_d}
        J^+[c_1] \cap \Sigma_2 = \cw(W_1\cup Y_1 \cup W_2)\cap \Sigma_2 \subseteq \ew (W_1 \cup Y_1 \cup W_2)\cap \Sigma_2.
    \end{equation}
    Similarly, by considering $c_2$ and $Y_2$, we  have
    \begin{equation}\label{eq:S0_SE_e}
        J^+[c_2] \cap \Sigma_2 = \cw(W_1\cup Y_2 \cup W_2)\cap \Sigma_2 \subseteq \ew (W_1 \cup Y_2 \cup W_2)\cap \Sigma_2.
    \end{equation}
    Taking the intersection of \eqref{eq:S0_SE_d} and \eqref{eq:S0_SE_e}, we get
        $$J^+[c_1] \cap J^+[c_2]\cap \Sigma_2 \subseteq \ew(W_1\cup W_2) \cap \Sigma_1,$$
    where we have used that fact that $\ew(W_1\cup W_2)$ is connected.
    This is schematically depicted in Fig.\ \ref{fig:S0_SE}(c,d). The rest follows directly.

    Combining \eqref{eq:S_SE_1} and \eqref{eq:S_SE_2}, we get that
    $$ J^+ [c_1] \cap J^+ [c_2] \cap J^- [r_1] \cap J^-[r_2] \cap J^+[\Sigma_1] \cap J^-[\Sigma_2] \subseteq \ew(V_1 \cup V_2) \cap  \ew(W_1 \cup W_2) \cap J^+[\Sigma_1] \cap J^-[\Sigma_2] $$
    Note that
    $\ew(V_1 \cup V_2) \cap  \ew(W_1 \cup W_2) \cap J^+[\Sigma_1] \cap J^-[\Sigma_2] \subseteq \ew(V_1 \cup V_2) \cap  \ew(W_1 \cup W_2)$ always holds, so we can drop out $ J^+[\Sigma_1] \cap J^-[\Sigma_2]$ on the right hand side.
    We show below that $\mathcal{S}_0=J^+ [c_1] \cap J^+ [c_2] \cap J^- [r_1] \cap J^-[r_2]$ always lies to the future of $\Sigma_1$ and to the past of $\Sigma_2$, so we can also drop out $ J^+[\Sigma_1] \cap J^-[\Sigma_2]$ on the left hand side. Therefore, we get the claim
    \begin{equation}
        J^+ [c_1] \cap J^+[c_2] \cap J^-[r_1] \cap J^-[r_2] \subseteq \ew(V_1 \cup V_2) \cap  \ew(W_1 \cup W_2).
    \end{equation}
    
    Assume by contradiction that $J^+[c_1] \cap J^+[c_2]\cap J^-[\Sigma_1] \neq \emptyset$, then we must have $\cw(V_1)\cap \cw(V_2)\neq \emptyset$. Since the entanglement wedge contains the causal wedge, $\ew(V_1)\cap \ew(V_2)\neq \emptyset$.
    But this contradicts the entanglement wedge nesting property, which implies that disjoint spacelike regions $V_1$ and $V_2$ have spacelike separated HRRT surfaces \cite{wall2014maximin}.
    Similarly, $J^-[r_1]\cap J^-[r_2]\cap J^+[\Sigma_2]=\emptyset$. Therefore, 
    \begin{align*}
        & J^+[c_2] \cap J^+[c_2] \cap J^-[r_1] \cap J^-[r_2] \cap J^+[\Sigma_1] \cap J^-[\Sigma_2] \\
        &= J^+[c_2] \cap J^+[c_2] \cap J^-[r_1] \cap J^-[r_2] \cap [J^-[\Sigma_1]\cup J^+[\Sigma_1]] \cap [J^-[\Sigma_2] \cup J^+[\Sigma_2]] \\
        &= J^+[c_2] \cap J^+[c_2] \cap J^-[r_1] \cap J^-[r_2].
    \end{align*}
\end{proof}

\subsection{Generalized Connected Wedge Theorem}\label{subsec:GCWT}
We now proceed to the generalized Connected Wedge Conjecture proposed by \cite{LL2025superadditivity}, which we state again for the convenience of the reader.
\begin{conjecture}\label{conj:LL}
Assume the conditions in Assumption \ref{assumption:1},
    $\mathcal{S}_E\neq \emptyset$ if and only if $\ew(V_1\cup V_2)$ is connected.
\end{conjecture}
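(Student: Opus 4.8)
The plan is to establish the biconditional in its two directions, treating sufficiency as the substantive one. For necessity I would argue by contraposition: assume $\ew(V_1\cup V_2)$ is disconnected, so by purity it splits as $\ew(V_1)\sqcup\ew(V_2)$ while the complementary wedge $\ew(X_1\cup X_2)$ is connected, the two sharing the single HRRT surface $\gamma:=RT(V_1\cup V_2)=RT(X_1\cup X_2)$. The cover identity \eqref{eq:D_J}, applied on a slice $\Sigma_1$ containing $\gamma$, gives $M=\ew(V_1\cup V_2)\cup\ew(X_1\cup X_2)\cup J^+(\gamma)\cup J^-(\gamma)$, so any hypothetical point of $S_E$ must sit in one of the two caps $\ew(V_i)$. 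I would then rule this out by showing $\ew(W_1\cup W_2)$ cannot protrude into a cap: using the causal relations \eqref{eq:relation_X_Y_c_r}, entanglement-wedge nesting \cite{wall2014maximin}, and the horizon description of Remark \ref{rmk:wedge_fact}, I would locate $RT(W_1\cup W_2)$ spacelike to $\gamma$ and on the $\ew(X_1\cup X_2)$ side of it, which confines $\ew(W_1\cup W_2)$ to $\ew(X_1\cup X_2)\cup J^+(\gamma)\cup J^-(\gamma)$ and forces $S_E=\emptyset$.

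For sufficiency I would exhibit an explicit point of $S_E$. When $\ew(V_1\cup V_2)$ is connected, $\gamma$ is made of the two geodesics spanning the gaps $X_1,X_2$, so $\ew(V_1\cup V_2)$ is the large central wedge and $\gamma$ lies deep in the causal shadow; in particular $\gamma\subset\ew(V_1\cup V_2)$. The aim is to show $\gamma\subset\ew(W_1\cup W_2)$ as well, yielding $\gamma\subset S_E$. To compare the two wedges I would place both $\gamma$ and $\delta:=RT(W_1\cup W_2)$ on a common bulk Cauchy slice $\Sigma$---legitimate because both HRRT surfaces lie in the causal shadow and are mutually spacelike, so a single maximin slice carries both \cite{wall2014maximin,headrick2014causality}---and then analyze the half-slice regions $\Sigma\cap\ew(V_1\cup V_2)$ and $\Sigma\cap\ew(W_1\cup W_2)$. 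Promoting the slice-level inclusions from the proof of Lemma \ref{lemma:S0_SE} and running a homology/linking argument on $\Sigma$, I would conclude that the two large connected wedges overlap in the bulk, so $S_E\neq\emptyset$.

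The crux, and what I expect to be the main obstacle, is that this construction only delivers a point of $S_E$ once one knows $\ew(W_1\cup W_2)$ is itself the large central wedge rather than a pair of caps over $W_1$ and $W_2$; were it disconnected, the central surface $\gamma$ would miss it entirely. The heart of the argument is therefore a symmetrization step showing that connectivity of $\ew(V_1\cup V_2)$ forces the corresponding connectivity of $\ew(W_1\cup W_2)$. I would attempt this from the full strength of the causal relations \eqref{eq:relation_X_Y_c_r}---which tie each gap diamond $X_i$ to the output $r_i$ and each $Y_i$ to the input $c_i$---together with a maximin comparison of the competing HRRT configurations under the null energy condition. The genuine difficulty is that the two phase transitions live on distinct time slices $\Sigma_1$ and $\Sigma_2$, so the two minimizations cannot be compared on a single Cauchy surface; bridging them requires a causal argument built from null focusing (as in Figure \ref{fig:focus_Minkowski}) to transport the length inequality $|RT_{\mathrm{conn}}|<|RT_{\mathrm{disc}}|$ from one slice to the other, and making this transport rigorous is where the essential work lies.
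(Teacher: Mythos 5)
Your proposal founders precisely on the step you flag as ``the heart of the argument'': the symmetrization claim that connectivity of $\ew(V_1\cup V_2)$ forces connectivity of $\ew(W_1\cup W_2)$. The paper shows this implication is \emph{false}, and with it the conjecture as stated. Its explicit counterexample (Figure \ref{fig:Vd_Wc_ex}) starts from vacuum AdS$_3$ in the marginal symmetric configuration and adds time-translation-invariant matter elongated along the $c_1$--$c_2$ direction: this pushes the $V$-channel into the connected phase while $RT(W_1\cup W_2)=RT(Y_1\cup Y_2)$ remains in the disconnected phase, and in that geometry $S_E=\emptyset$ even though $\ew(V_1\cup V_2)$ is connected --- a direct counterexample to the ``if'' direction you are trying to prove. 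Note that this counterexample is static, so the obstruction is not, as you anticipated, the difficulty of transporting a length inequality between the slices $\Sigma_1$ and $\Sigma_2$; it is that the two HRRT minimizations are controlled by independent (here, anisotropic) features of the geometry, and no focusing argument can tie them together. Your necessity direction is also not salvageable in general: the paper remarks that \cite{lima2025sufficientGCWT} exhibits configurations in which a wedge is disconnected yet $S_E$ is non-empty, so the cap-confinement argument you sketch cannot go through unconditionally. A further local flaw: even when both wedges are connected, your proposed witness $\gamma=RT(V_1\cup V_2)\subset\ew(W_1\cup W_2)$ fails already at the endpoints of $RT(X_i)$ on $\bb$, which lie on $\hat{\Sigma}_1$ outside $\hat{D}(W_1\cup W_2\cup Y_1\cup Y_2)=\ew(W_1\cup W_2)\cap\bb$.

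What the paper actually establishes is a corrected statement: replace $S_E$ by $\tilde{S}_E=[\ew(V_1\cup V_2)\setminus(\ew(V_1)\cup\ew(V_2))]\cap[\ew(W_1\cup W_2)\setminus(\ew(W_1)\cup\ew(W_2))]$, and require \emph{both} $\ew(V_1\cup V_2)$ and $\ew(W_1\cup W_2)$ to be connected; then $\tilde{S}_E$ has positive measure if and only if both are connected (the ``only if'' becoming true essentially by definition of $\tilde{S}_E$). The substantive direction is proved not by exhibiting an HRRT surface inside the opposite wedge but by a ridge construction in the spirit of \cite{maypenington2020holographic}: the past-pointing null sheets from $RT(Y_1)$ and $RT(Y_2)$ meet in a ridge $R$, the future-pointing null sheets from $RT(X_1)$ and $RT(X_2)$ in a ridge $T$; the cyclic order of the boundary endpoints $R_1,T_1,R_2,T_2$ forces the projected ridges to intersect an odd number of times, and if $R$ lay above $T$ (the case $S_E=\emptyset$) the nonpositive expansion of the null congruences would yield $|RT(W_1)|+|RT(W_2)|<|RT(Y_1)|+|RT(Y_2)|$, contradicting connectivity of $\ew(W_1\cup W_2)$, with a refinement of the same estimate excluding the measure-zero marginal intersections. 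So a correct treatment requires both weakening the target statement and an argument of this focusing/ridge type, rather than the symmetrization route on which your proof depends.
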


\begin{figure}
    \centering
    \includegraphics[width=0.8\linewidth,trim={5 15cm 5 3cm},clip]{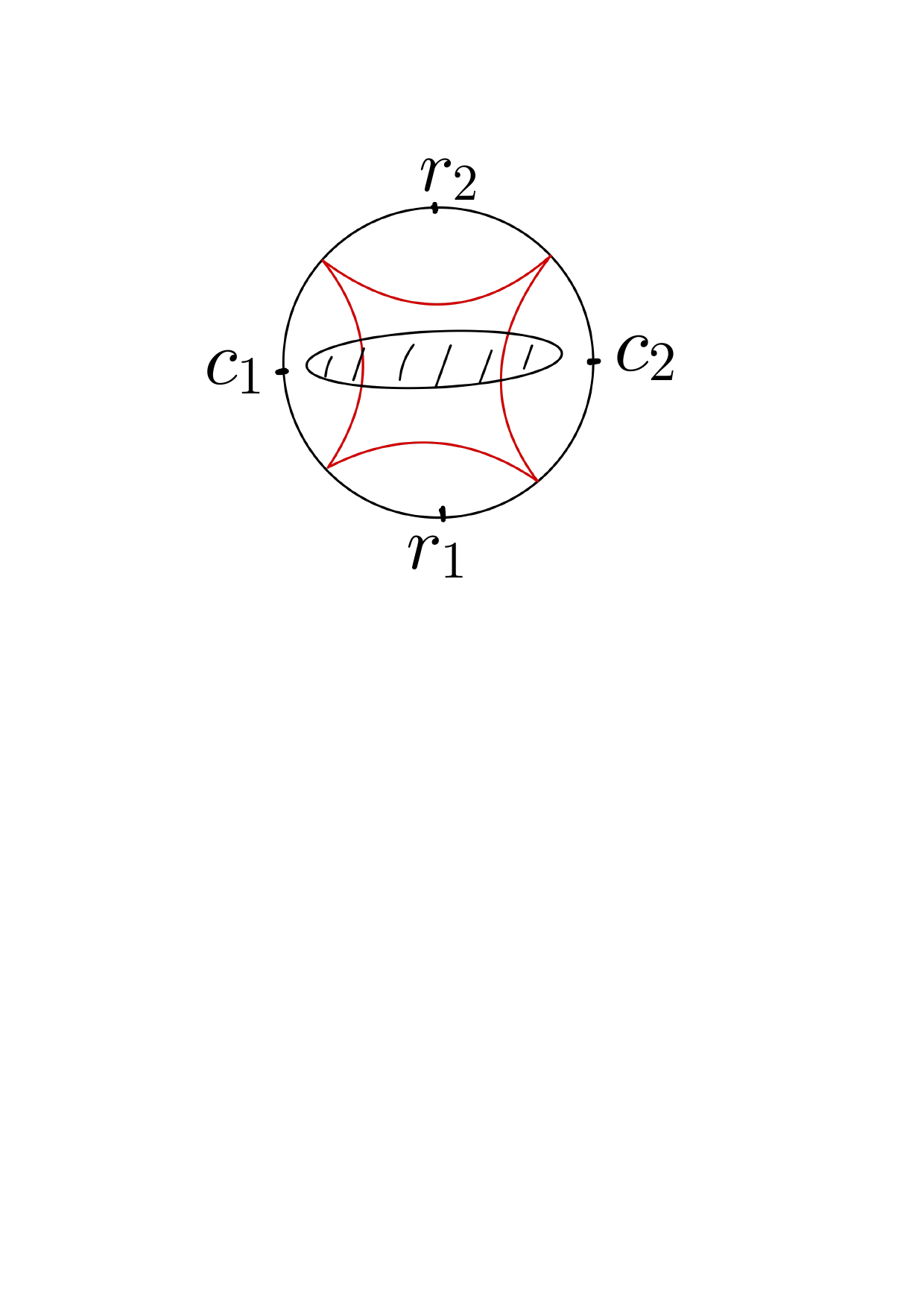}
    \caption{An explicit example of $\ew(V_1\cup V_2)$ being connected while $\ew(W_1\cup W_2)$ being disconnected. Here, the input points $c_1,c_2$ and output points $r_1,r_2$ as well as RT surfaces of $V_1, V_2$ and $W_1, W_2$ are all projected onto a single Cauchy slice (see Remark \ref{rmk:project_global_hyperbolic}). The four points $c_1,c_2,r_1,r_2$ are equally spaced along the circle. The shaded region indicates addition of matter in a non-spherically-symmetric way while maintaining the reflection symmetry across segment $r_1-r_2$ and segment $c_1-c_2$. This will make HRRT surfaces of $W_1, W_2$ (on $\Sigma_2$) and HRRT surfaces of $X_1, X_2$ (on $\Sigma_1$) to be the true minimum.}
    \label{fig:Vd_Wc_ex}
\end{figure}

We remark that Conjecture \ref{conj:LL} would have a corollary that $\ew(V_1\cup V_2)$ is connected if and only if $\ew(W_1\cup W_2)$ is connected, due to the symmetry between $V_i$'s and $W_i$'s in the definition of $\mathcal{S}_E$. However, we give an explicit example of $\ew(V_1\cup V_2)$ being connected yet $\ew(W_1\cup W_2)$ being disconnected. Examples of $\ew(V_1\cup V_2)$ being disconnected yet $\ew(W_1\cup W_2)$ being connected can be constructed similarly.
\begin{example}
   Start with the vacuum $AdS_3$ and consider the marginal set-up of $c_1,c_2, r_1, r_2$ such that $V_1, V_2, X_1, X_2$ and $W_1, W_2, Y_1, Y_2$ are all of the same size. Now consider a modified bulk metric that is time-translation invariant yet increases the measure along the $c_1-c_2$ direction. For example, one can add to the vacuum $AdS_3$ matter of ellipsoid shape which is elongated along the $c_1-c_2$ direction (Figure \ref{fig:Vd_Wc_ex}). This will make $\ew(V_1\cup V_2)$ and $\ew(Y_1\cup Y_2)$ both connected. Since $RT(W_1\cup W_2)=RT(Y_1\cup Y_2)$, we reach the situation where $\ew(V_1\cup V_2)$ is connected while $\ew(W_1\cup W_2)$ is disconnected. 

   We expect that there are generic situations in which one of $\ew(V_1 \cup V_2)$ or $\ew(W_1 \cup W_2)$ is connected while the other is disconnected, and that this phenomenon is not inherently tied to the symmetric configuration employed in the above example. 
\end{example}

This provides a counterexample to a corollary of Conjecture \ref{conj:LL}. Therefore, we propose a slight modification to the conjecture. 
First, we prove the most nontrivial part of Conjecture \ref{conj:LL}: if both $\ew(V_1\cup V_2)$ and $\ew(W_1\cup W_2)$ are connected, then they have a non-empty intersection.

Rephrasing in the language of operator algebra or quantum information, if the two input regions share $O(1/G_N)$ mutual information and if there exists a time-reversal symmetry so that the two output regions also share $O(1/G_N)$ mutual information, then there exists an operator common to the large-$N$ limit operator algebras $\mm_{V_1\cup V_2}$ and $\mm_{W_1\cup W_2}$.

\begin{theorem}\label{thm:SE_wedge}
Assume conditions in Assumption \ref{assumption:1}.
    Then, $\mathcal{S}_E =\ew(V_1\cup V_2)  \cap \ew(W_1\cup W_2)$  has positive measure if both $\ew(V_1 \cup V_2)$ and $\ew(W_1 \cup W_2)$ are connected.
\end{theorem}

\begin{proof}
For the purpose of explanation, let us pick $\hat{\Sigma}_1$ to be a boundary Cauchy slice of $V_1 \cup V_2 \cup X_1 \cup X_2$ and $\Sigma_1$ to be a corresponding bulk Cauchy slice that contains relevant HRRT surfaces and causal surfaces\footnote{This is always possible because 1) a HRRT surface always lies in the causal shadow and hence, if different, is spacelike separated from relevant causal surfaces \cite{headrick2014causality}; 2) HRRT surfaces of disjoint, spacelike-separated boundary regions can be minimal on the same Cauchy slice \cite{wall2014maximin}.}. Similarly, let $\hat{\Sigma}_2$ be a boundary Cauchy slice of $W_1 \cup W_2 \cup Y_1 \cup Y_2$ and $\Sigma_2$ be a corresponding bulk Cauchy slice that contains relevant HRRT surfaces and causal surfaces.

\begin{figure}
    \centering
    \includegraphics[width=0.8\linewidth]{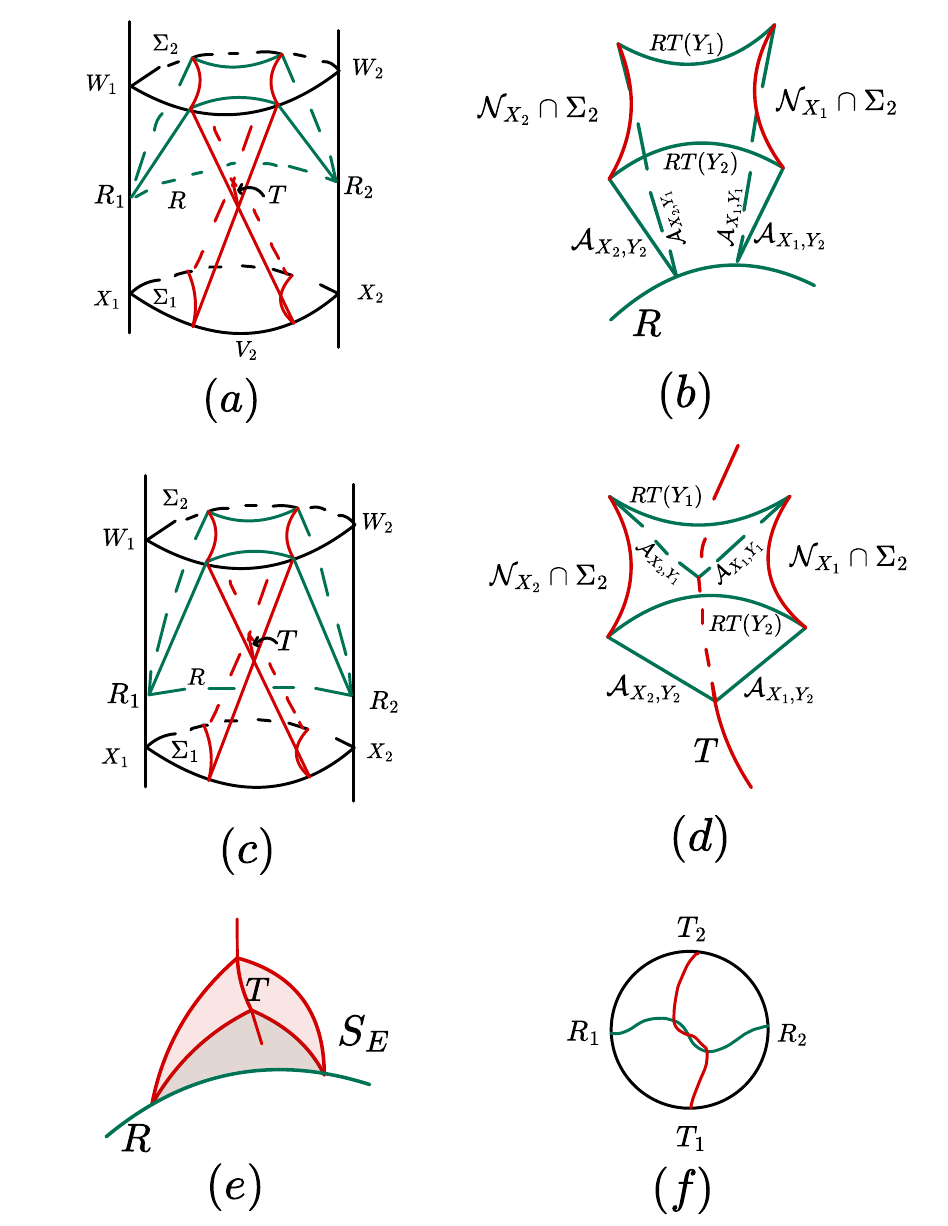}
    \caption{Schematic for intersections among $\NN_{X_i}$ and $\NN_{Y_j}$ when $\ew(V_1\cup V_2)$ and $\ew(W_1 \cup W_2)$ are both connected. The past-pointing null sheets from $RT(Y_1)$ and $RT(Y_2)$ intersect at the ridge $R$, which intersects $\bb$ at points $R_1$ and $R_2$. The future-pointing null sheets from $RT(X_1)$ and $RT(X_2)$ intersect at the ridge $T$, which intersects $\bb$ at points $T_1$ and $T_2$. A case of $\mathcal{S}_E = \emptyset$ is shown in (a-b) while a case of $\mathcal{S}_E$ has positive measure is shown in (c-e) with (e) illustrating the geometry of $\mathcal{S}_E$. Shown in $(f)$ is an example that ridges $R$ and $T$ or their projections intersect more than once. Red curves are associated with $\NN_{X_i}$ while green curves are associated with $\NN_{Y_j}$.}
    \label{fig:cut_ridge}
\end{figure}

Since $ \ew(V_1 \cup V_2)$ and $\ew(W_1 \cup W_2) $  are both connected, $RT(V_1\cup V_2)=RT(X_1)\cup RT(X_2)$ and $RT(W_1\cup W_2)=RT(Y_1)\cup RT(Y_2)$.
    We consider a similar geometric structure as used in ref. \cite{maypenington2020holographic}. Consider the two past-pointing and inward-pointing (e.g. pointing toward the bulk) null sheets emanating from $RT(Y_1)$ and $RT(Y_2)$ (green in Figure \ref{fig:cut_ridge}(a,c)) and the two future-pointing and inward-pointing (e.g. pointing toward the bulk) null sheets emanating from $RT(X_1)$ and $RT(X_2)$ (red in Figure \ref{fig:cut_ridge}(a,c)). Since we assume that the global boundary state is a pure state, $RT(Y_1)=RT(W_1\cup Y_2 \cup W_2)$. It follows from \eqref{eq:RT_bb} that the past-pointing and inward-pointing null sheet $\mathcal{N}_{Y_1}$ emanating from $RT(Y_1)$ is the past horizon of $\ew(W_1\cup Y_2\cup W_2)$ \footnote{Recall the past horizon of an entanglement wedge $\ew$ is the past Cauchy horizon of $\ew$ as a domain of dependence.}. Moreover, as emphsized in Section \ref{subsec:proof_strategy}, $\mathcal{N}_{Y_1}\cap \bb$ is the past horizon of $\hat{D}(W_1\cup Y_2\cup W_2)$, i.e. two boundary null rays emanating from $\partial Y_1$ (pointing away from $Y_1$ on $\bb$, red curves in Figure \ref{fig:setup}). Recall the definition of $Y_1$ and $\alpha_1$ in section \ref{subsec:review}, we have
    \begin{equation}\label{eq:NY1_bb}
        \mathcal{N}_{Y_1}\cap \bb \subseteq \hat{J}^-[\alpha_1].
    \end{equation}
    For the past-pointing and inward-pointing null sheet $\mathcal{N}_{Y_2}$ emanating from $RT(Y_2)$, we have
    \begin{equation}\label{eq:NY2_bb}
        \mathcal{N}_{Y_2}\cap \bb \subseteq \hat{J}^-[\alpha_2].
    \end{equation}    
  Similarly, the future pointing and inward-pointing null sheet $\mathcal{N}_{X_1}$ emanating from $RT(X_1)$ is the future horizon of $\ew(V_1\cup X_2\cup V_2)$ and intersects $\bb$ at boundary null rays from $\partial X_1$ (pointing away from $X_1$). Recall the definition of $X_1$ and $\beta_1$ in section \ref{subsec:review}, we have
    \begin{equation}\label{eq:NX1_bb}
        \mathcal{N}_{X_1}\cap \bb \subseteq \hat{J}^+[\beta_1].
    \end{equation}
  Similarly for the future-pointing and inward-pointing null sheet $\mathcal{N}_{X_2}$ emanating from $RT(X_2)$, we have 
    \begin{equation}\label{eq:NX2_bb}
        \mathcal{N}_{X_2}\cap \bb \subseteq \hat{J}^+[\beta_2]. 
    \end{equation}  

    Note that causality further constrains the intersection of these four null sheets\footnote{In general, the intersection of two distinct null sheets or two arbitrary hypersurfaces could consist of multiple curves and isolated points (where they are tangent to each other)}.
    Specifically, the null sheet $\mathcal{N}_{Y_1}$ simply partitions the part of spacetime $J^+[\Sigma_1] \cap J^+[\Sigma_2]\subseteq M$ into two parts: one to its future and the other to its past. Similar is true for the other three null sheets. Then the two null sheets $\mathcal{N}_{Y_1}$ and $\mathcal{N}_{Y_2}$ simply partition the bulk spacetime region $J^+[\Sigma_1] \cap J^+[\Sigma_2]$ into four parts and intersect at a simple curve \footnote{If the two null sheets intersect at more than one curve or their intersecting curve has nonempty self-intersections, this would contradict that they partition $J^+[\Sigma_1] \cap J^+[\Sigma_2]$ into four parts.}. Note that $\partial \hat{J}^-[\alpha_1] \cap \partial \hat{J}^-[\alpha_2]$ consists of two points, $R_1$ and $ R_2$, in $\hat{J}^+[\hat{\Sigma}_1] \cap \hat{J}^-[\hat{\Sigma}_2] \subseteq \bb$ (Figure \ref{fig:setup}). We thus have $\mathcal{N}_{Y_1} \cap \mathcal{N}_{Y_2} \cap \bb $ consist of exactly these two points. Similarly $\mathcal{N}_{X_1} \cap \mathcal{N}_{X_2}$ must be a simple (non-self-intersecting) curve intersecting $\bb$ at the two intersection points of $\partial \hat{J}^+[\beta_1]$ and $\partial \hat{J}^-[\beta_2]$.
        
    Let $R$ denote the ridge of intersection between the two past-pointing null sheets emanating from $RT(Y_1)$ and $RT(Y_2)$, i.e. $R = \mathcal{N}_{Y_1}\cap \mathcal{N}_{Y_2}$. From above discussions, we have
    \begin{equation}
        R\cap \bb = \hat{J}^-[\alpha_1] \cap \hat{J}^-[\alpha_2]=\{R_1, R_2\}.
    \end{equation}
    Let $T$ denote the ridge of intersection between the two future-pointing null sheets emanating from $RT(X_1)$ and $RT(X_2)$, i.e. $T = \mathcal{N}_{X_1}\cap \mathcal{N}_{X_2}$. From above discussions, we have
    \begin{equation}
        T \cap \bb = \hat{J}^+[\beta_1] \cap \hat{J}^+[\beta_2] = \{T_1, T_2\}.
    \end{equation}
    As can be seen in Figure \ref{fig:setup}, $R_1, T_2, R_2, T_1$ are future boundary points of causal domains $X_1, V_2, X_2, V_1$, respectively. When we project the four points onto a single Cauchy slice, using a global time function of $\overline{M}$ (see Remark \ref{rmk:project_global_hyperbolic}), they would be ordered in cyclic order (clockwise or counter-clockwise) since $X_1, V_2, X_2, V_1$ are arranged so. This is illustrated in Figure \ref{fig:cut_ridge}(f). These considerations force the intersection number between the ridge $R$ and the ridge $T$, when projected onto the same Cauchy slice (Remark \ref{rmk:project_global_hyperbolic}), to be an odd integer (a standard explanation is included in Lemma \ref{lemma:A2} for completeness).
    
Then, the possible configurations of these four null sheets are as follows:    \begin{enumerate}
        \item[(1)] The ridge $R$ lies above the ridge $T$ (Figure \ref{fig:cut_ridge}(a-b)). More precisely, looking toward the future from $\Sigma_1$, the two future-pointing null sheets $\NN_{X_1},\NN_{X_2}$ from $RT(X_1)$ and $RT(X_2)$ intersect at the ridge $T$ before cutting a segment of ridge $R$. In this case, $\mathcal{S}_E=\emptyset$ since $\ew(V_1\cup V_2)$ lies to the past of $T$ while $\ew(W_1\cup W_2)$ lies to the future of $R$ \footnote{Recall that an entanglement wedge is bounded by causal boundaries of its RT surface, as noted in Section \ref{subsec:proof_strategy}.}. 
        \item[(2)] The ridge $R$ lies below the ridge $T$ (Figure \ref{fig:cut_ridge}(c-d)). More precisely, looking toward the past from $\Sigma_2$, the two past-pointing null sheets $\NN_{Y_1},\NN_{Y_2}$ from $RT(Y_1)$ and $RT(Y_2)$ cut a segment along the ridge $T$ before intersecting at the ridge $R$. In this case, $\mathcal{S}_E$ has positive measure since $\ew(V_1\cup V_2)$ lies to the past of $T$ while $\ew(W_1\cup W_2)$ lies to the future of $R$.
        \item[(3)] The ridge $R$ and the ridge $T$ intertwines. Their projection onto $\Sigma_1$ intersect at multiple points $p_1,...p_k, k\geq 3$: at some $p_i$, $R$ is above $T$ while at other $p_j$, $R$ is below $T$. In this case, $\mathcal{S}_E$ still has positive measure.
        \item[(4)] The marginal case that $R$ and $T$ themselves, not their projections, intersect at isolated points. In this case, $|\mathcal{S}_E|=0$. A particular subcase is when the ridge $R$ and the ridge $T$ and hence all four null sheets intersect at a single point.
    \end{enumerate}
    We employ a similar calculation as in \cite{maypenington2020holographic} to exclude Possibility $(1)$ and $(4)$, i.e. the possibility of $\mathcal{S}_E=\emptyset$ or $|\mathcal{S}_E|=0$, under the assumption that both $ \ew(V_1 \cup V_2)$ and $\ew(W_1 \cup W_2) $ are connected. 
    
    Recall that both future-pointing and past-pointing null geodesics emanating from extremal surfaces have non-positive expansion $\theta$, when choosing the affine parameter $\lambda$ to increase away from extremal surfaces. Also recall that on a null hypersurface $\mathcal{H}$, changes in the area of spacelike cross-sections along null generators satisfy
    \begin{equation}\label{eq:theta_area}
        \text{area}(\mathcal{S}_2)-\text{area}(\mathcal{S}_1) = \int_\mathcal{H} \theta
    \end{equation}
    where the affine parameter of the cross-section $\mathcal{S}_2$ is greater than that of the cross-section $\mathcal{S}_1$.

    Consider Possibility $(1)$. Denote the part of the ridge $R$ cut by future pointing null sheets from $RT(X_1)$ and $RT(X_2)$ by $R_X$ (Figure \ref{fig:cut_ridge}(b)). By abuse of notation, we denote the subset of $\NN_{X_1}$ appearing in Figure \ref{fig:cut_ridge}(b) still by $\NN_{X_1}$ and similarly for the other three. Denote the cusps where a pair of the four null sheets intersect by $\mathcal{A}$ with proper subscripts. For example, the cusp formed by $\NN_{X_1}\cap \NN_{Y_2}$ is denoted by $\mathcal{A}_{X_1,Y_2}$. Denote the set of focal points and crossover seams on relevant null hypersurfaces by $F$ with suitable subscripts, e.g. $F_{X_1}$ on $\mathcal{N}_{X_1}$. These focal points and crossover seams are exactly where null geodesics leave the causal boundary and hence are also the set of null geodesic endpoints on these null hypersurfaces.
    
    Consider the two past-pointing null sheets emanating from $RT(Y_1)$ and $RT(Y_2)$ until they intersect at the ridge $R$, we have
    \begin{equation}\label{eq:ridge_1}
|\mathcal{A}_{X_1,Y_1}\cup \mathcal{A}_{X_1,Y_2}\cup \mathcal{A}_{X_2,Y_1}\cup \mathcal{A}_{X_2,Y_1}| +  |F_{Y_1}|+|F_{Y_2}|+|R_X| - |RT(Y_1 \cup Y_2)| = \int_{\mathcal{N}_{Y_1}\cup \mathcal{N}_{Y_2}} \theta \leq 0
    \end{equation}
    where $|\cdot|$ denotes the length.
    Consider the two future-pointing null sheets $\NN_{X_1},\NN_{X_2}$ emanating from $RT(X_1)$ and $RT(X_2)$ starting from the ridge $R$ until they intersect $\Sigma_2$, we have
    \begin{equation}\label{eq:ridge_2}
       |\NN_{X_1}\cap \Sigma_2| +  |\NN_{X_2} \cap \Sigma_2| + |F_{X_1}| +|F_{X_2}| - |\mathcal{A}_{X_1,Y_1}\cup \mathcal{A}_{X_1,Y_2}\cup \mathcal{A}_{X_2,Y_1}\cup \mathcal{A}_{X_2,Y_1}|  = \int_{\mathcal{N}_{X_1}\cup \mathcal{N}_{X_2}} \theta \leq 0.
    \end{equation}
    Combining \eqref{eq:ridge_1} and \eqref{eq:ridge_2}, we get
    \begin{equation}\label{eq:ridge_3}
        |\NN_{X_1}\cap \Sigma_2| +  |\NN_{X_2} \cap \Sigma_2| < |RT(Y_1 \cup Y_2)| 
    \end{equation}
    where the inequality is strict particularly because $|R_X|>0$.
    Note that \eqref{eq:NX1_bb} implies that $\NN_{X_1}\cap \Sigma_2$ have the same endpoints on $\bb$ as $RT(W_2)$, i.e. $\NN_{X_1}\cap \Sigma_2$ is homologous to $RT(W_2)$. Similarly, \eqref{eq:NX2_bb} implies that $\NN_{X_2}\cap \Sigma_2$ is homologous to $RT(W_1)$. Then, \eqref{eq:ridge_3} contradicts the assumption that $\ew(W_1\cup W_2)$ is connected, which instead implies that
    \begin{equation}
       |RT(Y_1)| + |RT(Y_2)| <  |RT(W_1)| + |RT(W_2)| \leq  |\NN_{X_1}\cap \Sigma_2| +  |\NN_{X_2} \cap \Sigma_2|.
    \end{equation}
    This excludes Possibility $(1)$ that $\mathcal{S}_E=\emptyset$.

    For Possibility (2) (Figure \ref{fig:cut_ridge}(d)), the same analysis would yield
    \begin{align}\label{eq:ridge_case2}
        &|\mathcal{A}_{X_1,Y_1}\cup \mathcal{A}_{X_1,Y_2}\cup \mathcal{A}_{X_2,Y_1}\cup \mathcal{A}_{X_2,Y_1}|  + |F_{Y_1}| +|F_{Y_2}| - |RT(Y_1 \cup Y_2)| \nonumber \\
        &= \int_{\mathcal{N}_{Y_1}\cup \mathcal{N}_{Y_2}}\theta \leq 0, \\
        &|\NN_{X_1}\cap \Sigma_2| +  |\NN_{X_2} \cap \Sigma_2| + |F_{X_1}|+|F_{X_2}| - |\mathcal{A}_{X_1,Y_1}\cup \mathcal{A}_{X_1,Y_2}\cup \mathcal{A}_{X_2,Y_1}\cup  \mathcal{A}_{X_2,Y_1}| - |T_Y| \nonumber \\
        &= \int_{\mathcal{N}_{X_1}\cup \mathcal{N}_{X_2}} \theta \leq 0,
    \end{align}
    where $T_Y$ denote the part of $T$ cut by the two past pointing null sheets from $RT(Y_1)$ and $RT(Y_2)$. There is no contradiction in this case, particularly because of the nonzero $|T_Y|$. 
    
    For Possibility (3), we have a combination of the previous two cases. A similar analysis does not lead to contradiction. We leave out the details to the reader. 
    
    For possibility (4), if the ridge $R$ and the ridge $T$ have multiple intersection points, then we obtain a similar geometric structure as that discussed in Possibility $(1)$ from the part of $R$ between the first intersection point and the last intersection point (Figure \ref{fig:cut_ridge_marginal_multip}(b)). We therefore exclude this scenario with a similar contradiction.       
\begin{figure}
    \centering
    \includegraphics[width=0.5\linewidth,trim={5 5cm 5 4cm},clip]{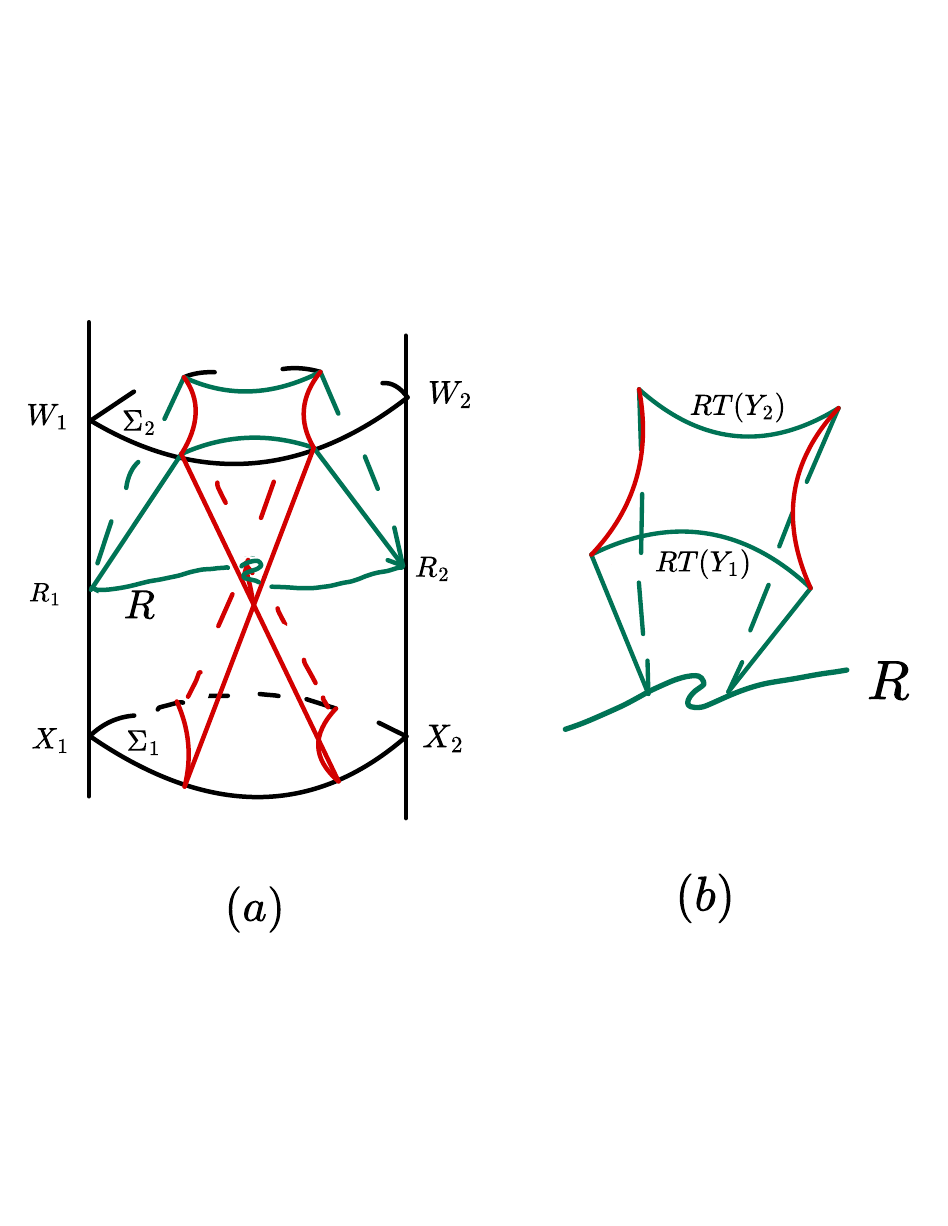}
    \caption{Schematic for $\mathcal{S}_E$ consisting of multiple points. That is, the ridge $R$ and the ridge $T$ intersect at more than one point. We get the same geometric structure as before when restricting to the prat of $R$ between the first and the last intersection point.}
    \label{fig:cut_ridge_marginal_multip}
\end{figure}

When the ridge $R$ and the ridge $T$ intersect at exactly one point $p$, $\mathcal{S}_E=\{p\}$ and $|R_X|=|T_Y|=0$.  Combining \eqref{eq:ridge_1} and \eqref{eq:ridge_2}, we have
\begin{align}
    |RT(W_1)|+|RT(W_2)| + |F| \leq |\NN_{X_1}\cap \Sigma_2| +  |\NN_{X_2} \cap \Sigma_2| + |F| \nonumber \\
    = \int_{\mathcal{N}_{X_1}\cup \mathcal{N}_{X_2}\cup \mathcal{N}_{Y_1}\cup \mathcal{N}_{Y_2}} \theta + |RT(Y_1\cup Y_2)| \leq |RT(Y_1)| + |RT(Y_2)|, \label{eq:SE_1p}
\end{align}
where $|F|$ include all null geodesic endpoints on all four partial null sheets. 
One can reach the same conclusion using \eqref{eq:ridge_case2}.
Equation \eqref{eq:SE_1p} implies that $\ew(W_1\cup W_2)$ is either disconnected or marginal, contradicting the assumption that $\ew(W_1 \cup W_2)$ is connected. Possibility $(4)$ is thus excluded. This completes the proof.
\end{proof}

Now to settle the Conjecture \ref{conj:LL}, we need to examine whether $\mathcal{S}_E = \emptyset$ follows when either of $\ew(V_1\cup V_2)$ and $\ew(W_1\cup W_2)$ is not connected. Ref. \cite{lima2025sufficientGCWT} found explicit counterexamples to such claims. Therefore, $\mathcal{S}_E\neq \emptyset$ is only a sufficient but not necessary condition to connectedness of the two entanglement wedges. Recall that the original CWT gives a necessary but sufficient condition, $\mathcal{S}_0 \neq \emptyset$, for both $\ew(V_1\cup V_2)$ and $\ew(W_1 \cup W_2)$ being connected. Therefore, the equivalence condition to connectedness of the two entanglement wedges must involve another set $\mathcal{S}'$ satisfying $\mathcal{S}_0 \subseteq \mathcal{S}' \subseteq \mathcal{S}_E$. Ref.\cite{lima2025sufficientGCWT} proposed $\mSE$. That  $\mSE\neq \emptyset$ is a necessary condition is immediate from its definition. We now extend the above proof of Theorem \ref{thm:SE_wedge} to show that $\mSE\neq \emptyset$ is also sufficient. We then illustrate on the relation between $\mSE$ and $\mathcal{S}_E$, in particular that they could be different, in Remark \ref{rmk:SE_mSE_diff}.
\begin{lemma}\label{lemma:CSE_wedge}
Assume the conditions in Assumption \ref{assumption:1}.
    Define
    $$\mSE =[\ew(V_1\cup V_2) \setminus (\ew(V_1)\cup \ew(V_2))] \, \bigcap \, [\ew(W_1\cup W_2) \setminus (\ew(W_1)\cup \ew(W_2))].$$
    Then $\mSE$ has positive measure
    if both $\ew(V_1\cup V_2)$ and $\ew(W_1\cup W_2)$ are connected.
\end{lemma}

\begin{proof}
    
Specifically, we show that when both $\ew(V_1\cup V_2)$ and $\ew(W_1\cup W_2)$ are connected
\begin{align}\label{eq:SE_modified}
    \ew(V_1)\cap T &=T_1, \quad \ew(V_2)\cap T=T_2, \nonumber \\
    \ew(W_1)\cap R &=R_1, \quad \ew(W_2)\cap R=R_2,  
\end{align}
where the ridge $T$ is the intersection between between future-pointing, inward-pointing null sheets from $RT(X_1)$ and $RT(X_2)$ and the ridge $R$ is the intersection between between past-pointing, inward-pointing null sheets from $RT(Y_1)$ and $RT(Y_2)$. Since the proof of Theorem \ref{thm:SE_wedge} establishes that $\mathcal{S}_E$ contains interior points of the ridges $T$ and $R$, and equation \eqref{eq:SE_modified} ensures these points lie outside $\ew(V_1)\cup\ew(V_2)\cup\ew(W_1)\cup\ew(W_2)$, it follows that $\tilde{\mathcal{S}}_E \neq \emptyset$.

First note that $\ew(V_1) \, \cap \, \bb = \hat{D}(V_1)=V_1$ and hence $T_1 \in \ew(V_1)$.
Assume by contradiction that another distinct point $p\in T$ also lies in $\ew(V_1)$. Since the ridge $T$ is the intersection between future-pointing, inward-pointing null sheets from $RT(X_1)$ and $RT(X_2)$, there exists a future-pointing null geodesic $\gamma_i$ connecting a point in the interior of $RT(X_i)$ to $p$ (note the unique null geodesic generator through $\partial RT(X_1)$ connects to $T_1$). Since $p\in \ew(V_1)$, then $\gamma_i$ is a null geodesic from the interior of $RT(X_i)$ to $\ew(V_1)$. Since $\ew(V_1)$ is defined as a domain of dependence, $\gamma_i$ should instead intersect the homology region of $V_1$. This contradicts with the fact that $V_1$ and $X_i$ are spacelike-separated, disjoint (sharing a boundary point) regions and hence have spacelike separated homology regions \cite{wall2014maximin}. Similar arguments establish other claims in \eqref{eq:SE_modified}.

\end{proof}


\begin{theorem}\label{thm:GCWT}
    Under the conditions of Assumption \ref{assumption:1}, 
    $\mSE$ has positive measure if and only if both $\ew(V_1\cup V_2)$ and $\ew(W_1\cup W_2)$ are connected. 
\end{theorem}
\begin{proof}
    Note that if either $\ew(V_1\cup V_2)$ or $\ew(W_1\cup W_2)$ is disconnected, then $\mSE$ is empty by definition. If either $\ew(V_1\cup V_2)$ or $\ew(W_1\cup W_2)$ is marginal, we define $\ew(V_1\cup V_2)=\ew(V_1)\cup \ew(V_2)$ and hence $\mSE =\emptyset$. In short, if either of  $\ew(V_1\cup V_2)$ and $\ew(W_1\cup W_2)$ is not connected, we have $\mSE=\emptyset$ by definition. Then the theorem follows from Lemma \ref{lemma:CSE_wedge}.
\end{proof}

\begin{corollary}\label{cor:SE_CSE}
    Assume the condition in Assumption \ref{assumption:1}. Then $$S_0\subseteq \mSE.$$
\end{corollary}
\begin{proof}
This follows directly from the original CWT and Theorem \ref{thm:GCWT}.
\end{proof}

\begin{remark}
   We note that explicit examples are given in \cite{lima2025sufficientGCWT} to show that $\mathcal{S}_E$ can either be empty or non-empty when either of $\ew(V_1\cup V_2)$ and $\ew(W_1\cup W_2)$ is disconnected. The current version of the generalized Connected Wedge Theorem, i.e. Theorem \ref{thm:GCWT}, puts more emphasis  than the proposed conjecture, i.e. Conjecture \ref{conj:LL}, on additional operators in $\mathcal{M}_{\ew(V_1\cup V_2)}$ and $\mathcal{M}_{\ew(W_1\cup W_2)}$, or equivalently, the strict superadditivity of boundary algebras.
\end{remark}
\begin{figure}
    \centering
    \includegraphics[width=\linewidth,trim={5 3.5cm 5 2cm},clip]{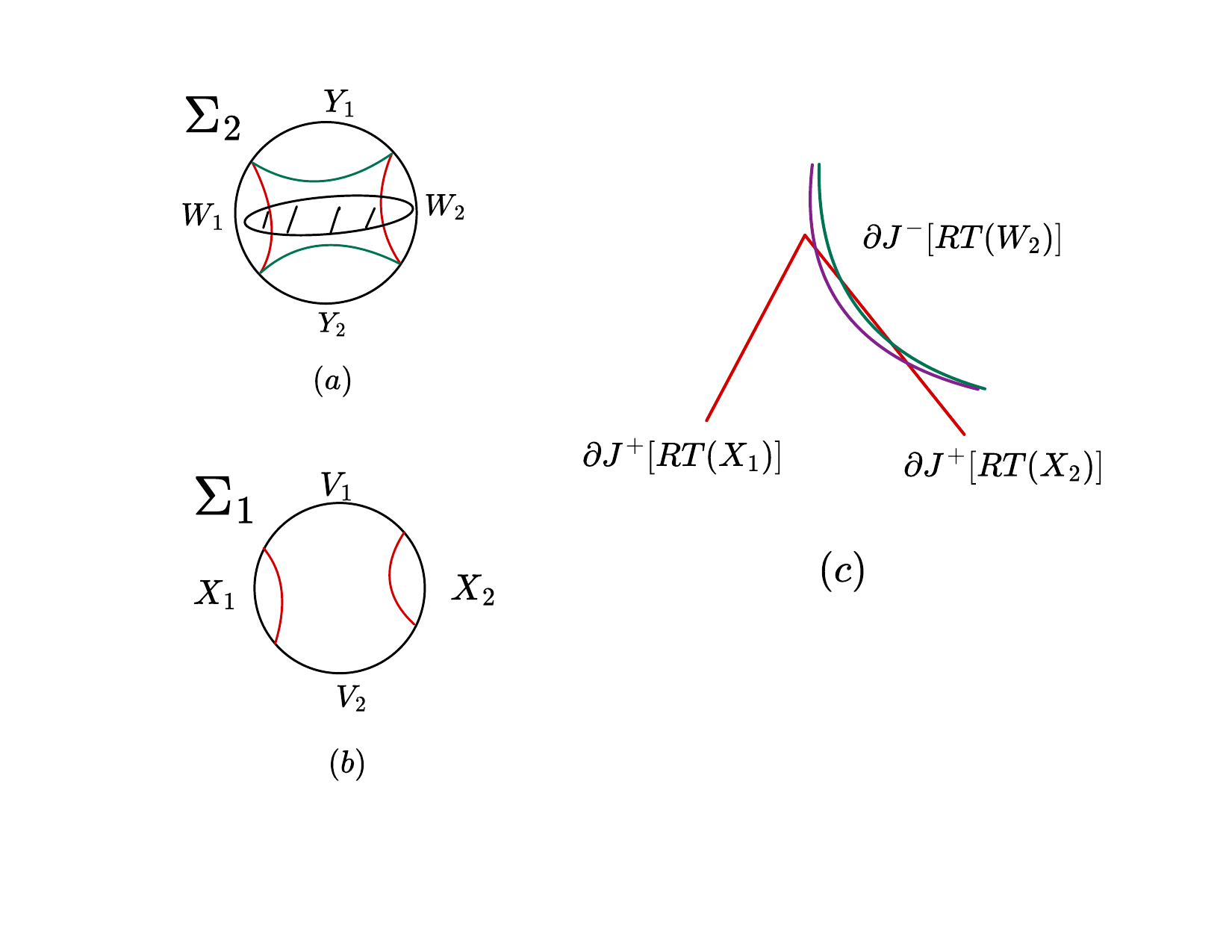}
    \caption{Schematic illustration of possible scenarios with $\tilde{\mathcal{S}}_E \subsetneq \mathcal{S}_E$. Panel $(a)$ shows $\Sigma_2$ along with HRRT surfaces of $W_1$, $Y_1$, $W_2$, and $Y_2$. The shaded region indicates where matter has been added to make $\ew(W_1\cup W_2)$ connected. Panel $(b)$ shows $\Sigma_1$ together with $RT(V_1\cup V_2) = RT(X_1) \cup RT(X_2)$. Panel $(c)$ shows a cross-section through the middle of $M$. The red curves delineate $\ew(V_1\cup V_2)$, while the green and purple curves show the past horizon of $\ew(W_1)$ before and after the addition of matter, respectively.}
    \label{fig:SE_mSE}
\end{figure}
\begin{remark}\label{rmk:SE_mSE_diff}
One naturally asks whether $\tilde{\mathcal{S}}_E = \mathcal{S}_E$ holds when both $\ew(V_1\cup V_2)$ and $\ew(W_1\cup W_2)$ are connected. We can modify examples from \cite{lima2025sufficientGCWT} to construct explicit configurations where $\tilde{\mathcal{S}}_E \subsetneq \mathcal{S}_E$. Consider a scenario where $\ew(V_1 \cup V_2)$ is connected and $\mathcal{S}_E \neq \emptyset$, but $\ew(W_1\cup W_2)$ is disconnected. This implies that either $\ew(W_1)$ or $\ew(W_2)$ has nontrivial intersection with $\ew(V_1\cup V_2)$. Assuming $\ew(W_2) \cap \ew(V_1\cup V_2) \neq \emptyset$, we add matter near $\Sigma_2$ such that $\ew(V_1\cup V_2)$ remains connected while $\ew(W_1 \cup W_2)$ becomes connected
(the metric deformation increases the separation along the $r_1$-$r_2$ direction).
The added matter retards past-directed and outward-directed null geodesics emanating from $RT(W_2)$. Consequently, the past horizon of $\ew(W_2)$ extends deeper into $\ew(V_1\cup V_2)$ (Figure \ref{fig:SE_mSE}). Since $\tilde{\mathcal{S}}_E$ excludes $\ew(W_2) \cap \ew(V_1\cup V_2)$ by definition, it becomes strictly smaller than $\mathcal{S}_E$.
\end{remark}

\section{Discussion}\label{section:discussion}
Similar to the proof of CWT, the proof given here will also work for semiclassical spacetimes that satisfy the quantum maximin formula \cite{akers2020quantummaxmin} and the quantum focusing conjecture \cite{bousso2016quantumfocusing}.

The original motivation of CWT is to give an operational restatement of the holographic principle -- ``An asymptotic quantum task is possible in the bulk if and only if the corresponding boundary task is"\cite{may2019quantumtask}.
Since the CWT concerns the situation where the bulk is described by low energy effective field theory, one would expect the CWT only goes in one direction: given an asymptotic quantum task that can be completed in the bulk using the low energy dynamics, one asserts that there must be a boundary procedure for completing the same task.
Also noted by May, the mutual information being positive does not imply the B84$^n$ task can be completed.

Given such considerations, it may appear surprising to have an equivalence statement, as proved here, between certain bulk classical geometry feature and a mutual information assertion on the boundary. However, the condition of connected entanglement wedge does not just imply existence of mutual information between decision regions but implies $O(1/G_N)$ mutual information. One is led to the following questions
\begin{itemize}
    \item Does $O(1/G_N)$ mutual information necessarily imply that the task must be performable on the boundary?
    \item Does $\mSE\neq \emptyset$ incorporate bulk quantum task somehow, given that $\mathcal{S}_0$ could be empty?
\end{itemize}
Understanding these questions from quantum information perspective would be of great conceptual importance.

Nevertheless, the proof here gives strong support that region $\mSE$ can be viewed as the bulk geometrization of the collection of all non-local quantum tasks that can be performed. In other words, any non-local quantum task has to be implemented via some operation in $\mSE$, and vice versa. Combined with the subregion-subalgebra duality, which states that $\mSE$ is dual to some boundary algebra $\mathcal{M}_{\mSE}$, the generalized connected wedge theorem  proved here (in principle) gives an explicit boundary theory characterization of all non-local quantum tasks \footnote{We thank Hong Liu for pointing this out to us.}.

Lastly, we note that ref. \cite{may2022nton} generalized the original CWT to $n$-to-$n$ process, which replaces the nonempty bulk scattering region by connectedness condition of a $2$-to-all causal graph. It would be of interest to promote this $n$-to-$n$ Connected Wedge Theorem to an equivalence statement. We leave this to future work.

\acknowledgments
I am deeply grateful to Edward Witten for his invaluable advice, stimulating discussions, and critical reading of the manuscript. I  thank Caroline Lima, Sabrina Pasterski and Chris Waddell for sharing their manuscripts and insights on the same problem. I also thank Sam Leutheusser, Hong Liu and an anonymous reviewer for their careful reading of the draft and helpful feedback. 
I acknowledge the hospitality of the Institute for Advanced Study (IAS), where the majority of this work was completed.

\appendix
\section{Some basic facts}\label{sec:appendix}

\renewcommand{\thesection}{\Alph{section}} 
\setcounter{Counter}{1}
\renewcommand{\thetheorem}{\Alph{section}.\arabic{theorem}} 

Here we collect some basic facts and their proofs which are invoked in the main text.

\begin{lemma}\label{lemma:A1}
    Consider a globally hyperbolic spacetime $M$. Let $S_1$ be a closed spacelike subset of $M$, which is contained in a Cauchy slice $\Sigma_1$. Consider $\partial J^+[S_1]$ and a cross section to the future, $S_2 = \partial J^+[S_1] \cap \Sigma_2$. Then we have the following:
    \begin{enumerate}
        \item To the future of $\Sigma_2$, $\partial J^+[S_2]$ coincides with $\partial J^+[S_1]$.
        \item Within $\Sigma_1$ and $\Sigma_2$, $\partial J^-[S_2]$ does not necessarily coincide with $\partial J^+[S_1]$. When they are distinct, $\partial J^-[S_2]$ lies to the past of $\partial J^+[S_1]$.
    \end{enumerate}
    The same conclusion holds for past causal boundaries.
\end{lemma}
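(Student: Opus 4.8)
The plan is to prove both parts by tracking the null geodesic generators of the causal boundaries across the Cauchy slices, using only global hyperbolicity of $M$ together with two standard structural facts: (i) for a closed set $C$, the achronal hypersurface $\partial J^+(C)$ is ruled by null geodesic generators whose past endpoints lie on $C$, and a generator passing through a boundary point stays on $\partial J^+(C)$ along the whole segment joining it to $C$, leaving the boundary only at a focal point or crossover; and (ii) every inextendible causal curve meets the Cauchy slice $\Sigma_2$ exactly once. I will also repeatedly invoke the causal push-up relations $I^+\circ J^+ = I^+$ and $a \le b \ll c \Rightarrow a \ll c$.

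For part (1), I would take $p \in \partial J^+(S_1)$ to the future of $\Sigma_2$ and let $\gamma$ be a generator from $S_1$ to $p$. By (ii) it crosses $\Sigma_2$ at a unique point $q$, and by (i) the segment from $S_1$ to $p$ lies on $\partial J^+(S_1)$, so $q \in \partial J^+(S_1) \cap \Sigma_2 = S_2$. The future segment of $\gamma$ from $q$ to $p$ then exhibits $p \in J^+(S_2)$, while $p \in I^+(S_2)$ would force $p \in I^+(J^+(S_1)) = I^+(S_1)$, contradicting $p \in \partial J^+(S_1)$; hence $p \in \partial J^+(S_2)$. This gives $\partial J^+(S_1) \cap J^+(\Sigma_2) \subseteq \partial J^+(S_2)$. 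The reverse containment holds along the outgoing continuation of the generators through $S_2$: a future generator of $\partial J^+(S_2)$ issuing from $q \in S_2$ that prolongs the generator of $\partial J^+(S_1)$ through $q$ stays achronal and therefore remains on $\partial J^+(S_1)$, yielding the asserted coincidence to the future of $\Sigma_2$.

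For part (2) the content is a single clean causal contradiction. I would show $\partial J^-(S_2) \cap I^+(S_1) = \emptyset$: if $x \in \partial J^-(S_2)$ with $x \in I^+(S_1)$, pick $s \in S_1$ with $s \ll x$ and $q \in S_2$ with $x \le q$; push-up gives $s \ll q$, i.e. $q \in I^+(S_1)$, contradicting $q \in S_2 \subseteq \partial J^+(S_1)$. Since $M \setminus I^+(S_1) = \partial J^+(S_1) \sqcup (M \setminus J^+(S_1))$, every point of $\partial J^-(S_2)$ between the slices either lies on $\partial J^+(S_1)$ or strictly below it, which is exactly the assertion that $\partial J^-(S_2)$ lies to the past of $\partial J^+(S_1)$. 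That the two need not coincide is then seen by exhibiting focusing: where the generators of $\partial J^+(S_1)$ develop a caustic or crossover between $\Sigma_1$ and $\Sigma_2$, the past cone of $S_2$ acquires additional generators lying strictly outside $J^+(S_1)$, whereas in the absence of such focusing the past generators of $S_2$ retrace those of $\partial J^+(S_1)$ and the surfaces agree. The past-boundary version follows verbatim under time reversal.

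The main obstacle I anticipate is bookkeeping in part (1): $S_2 = \partial J^+(S_1)\cap \Sigma_2$ is a codimension-two cut, so $\partial J^+(S_2)$ a priori carries a second, inward null sheet that is not part of $\partial J^+(S_1)$; genuine equality therefore requires either restricting to the outgoing sheet relevant to the application or interpreting the cross-section as the full shadow $J^+(S_1)\cap \Sigma_2$. Care is likewise needed with generator endpoints---ensuring that the crossing point $q$ is not itself a focal or crossover point and that prolonged generators remain achronal---but these are controlled by (i) and do not affect the ordering statement of part (2), which is the fact actually invoked in the proof of Lemma \ref{lemma:S0_SE}.
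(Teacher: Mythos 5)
Your proof is correct and follows essentially the same route as the paper's: part (1) via the achronal generator from $S_1$ crossing the Cauchy slice $\Sigma_2$ at a point of $S_2$ (you add the push-up step $I^+(S_2)\subset I^+(S_1)$ needed to place $p$ on $\partial J^+(S_2)$ rather than merely in $J^+(S_2)$, which the paper leaves implicit), and part (2) via the same push-up contradiction that the paper compresses into the single sentence ``otherwise we would contradict $S_2\subset J^+(S_1)$.'' The obstacle you flag in part (1) is moreover a genuine defect of the statement rather than of your argument: for the codimension-two cut $S_2=\partial J^+(S_1)\cap\Sigma_2$, the inward-pointing sheet of $\partial J^+(S_2)$ lies in $I^+(S_1)$, so literal coincidence fails, and the paper's own proof sidesteps this only by silently switching to the shadow, writing $S_2=J^+(S_1)\cap\Sigma_2$ mid-proof---exactly the reinterpretation (or the restriction to the outgoing sheet) that you propose, and the only version needed where the lemma is invoked in Lemma \ref{lemma:S0_SE}.
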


\begin{proof}

    \begin{enumerate}
        \item Since $S_2 \subseteq J^+[S_1]$, $\partial J^+[S_2]\subseteq J^+[S_1]$. Let $p$ be an arbitrary point on $\partial J^+[S_1]$ to the future of $\Sigma_2$, i.e. $p \in \partial J^+[S_1]\cap J^+[\Sigma_2]$ (Figure \ref{fig:appfig1}(a)). Then there is a prompt (i.e. achronal) null geodesic $\gamma_{O\to p}$ that connects a point $O\in S_1$ and $p$. Since $\Sigma_2$ is a Cauchy slice, $\gamma_{O\to p}$ intersects $\Sigma_2$. 
        On the other hand, $S_2=J^+[{S_1}]\cap \Sigma_2$. Thus, $\gamma_{O\to p}$ intersects $S_2$, say at $q$. 
        Since $\gamma_{O\to p}$ is prompt or achronal, its restriction to $J^+[\Sigma_2]$, i.e. $\gamma_{q\to p}$ is also prompt. That is, $p \in J^+[S_2]$. 
        We thus proved that to the future of $\Sigma_2$, $\partial J^+[S_1]$ coincides with $\partial J^+[S_2]$. Reversing the time direction, one gets the same conclusion by replacing $J^+$ with $J^-$.
        \item Any point $q \in S_2$ is connected to a point $O \in S_1$ by a prompt null geodesic $\gamma_{O\to q}$. Reserving the direction $\gamma_{q\to O}$ also lies in $\partial J^-[S_2]$. However, not every point on $S_1$ lies on a prompt null geodesic reaching $S_2$, due to infinitesimal focusing or colliding null generators. That causal wedge does not necessarily coincide with entanglement wedge is an example of this phenomenon. 
        Figure \ref{fig:focus_Minkowski} is another example. Let $S_1$ be the wedge on $\{t=0\}$ plane and $S_2$ be $\partial J^+[S_1]\cap {t=t_0}$. Considering $\partial J^-[S_2]\cap \{t=0\}$, instead of the sharp corner of $S_1$,  one sees a subset of the light cone of the vertex of $S_2$ (Figure \ref{fig:appfig1}(b)).

         When they are distinct, it is easy to see that $\partial J^-[S_2]$ lies to the past of $\partial J^+[S_1]$. Since if otherwise, we would get contradiction with $S_2\subseteq J^+[S_1]$.
    \end{enumerate}
    \begin{figure}
        \centering
        \includegraphics[width=0.8\linewidth,trim={5 2cm 5 5},clip]{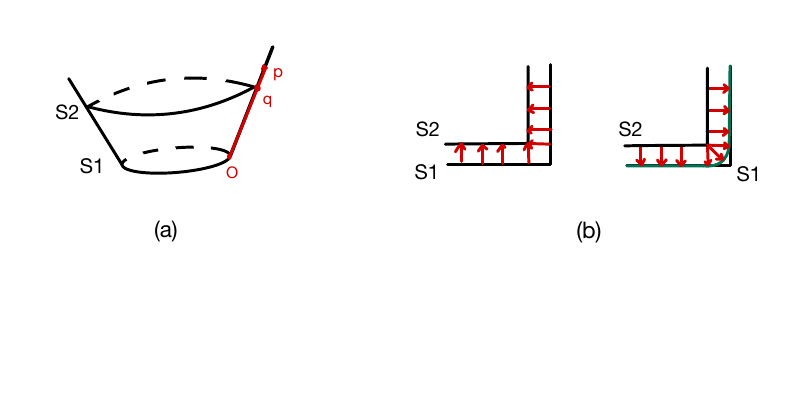}
        \caption{Illustration of relation between $\partial J^+[S_1]$ and $\partial J^\pm[S_2]$. $(a)$ $\partial J^+[S_1]$ and $\partial J^+[S_2]$ coincide in the future of $S_2$. $(b)$  $\partial J^+[S_1]$ and $\partial J^-[S_2]$ does not necessarily coincide in-between $S_1$ and $S_2$.}
        \label{fig:appfig1}
    \end{figure}
\end{proof}

\begin{figure}
    \centering
    \includegraphics[width=0.5\linewidth]{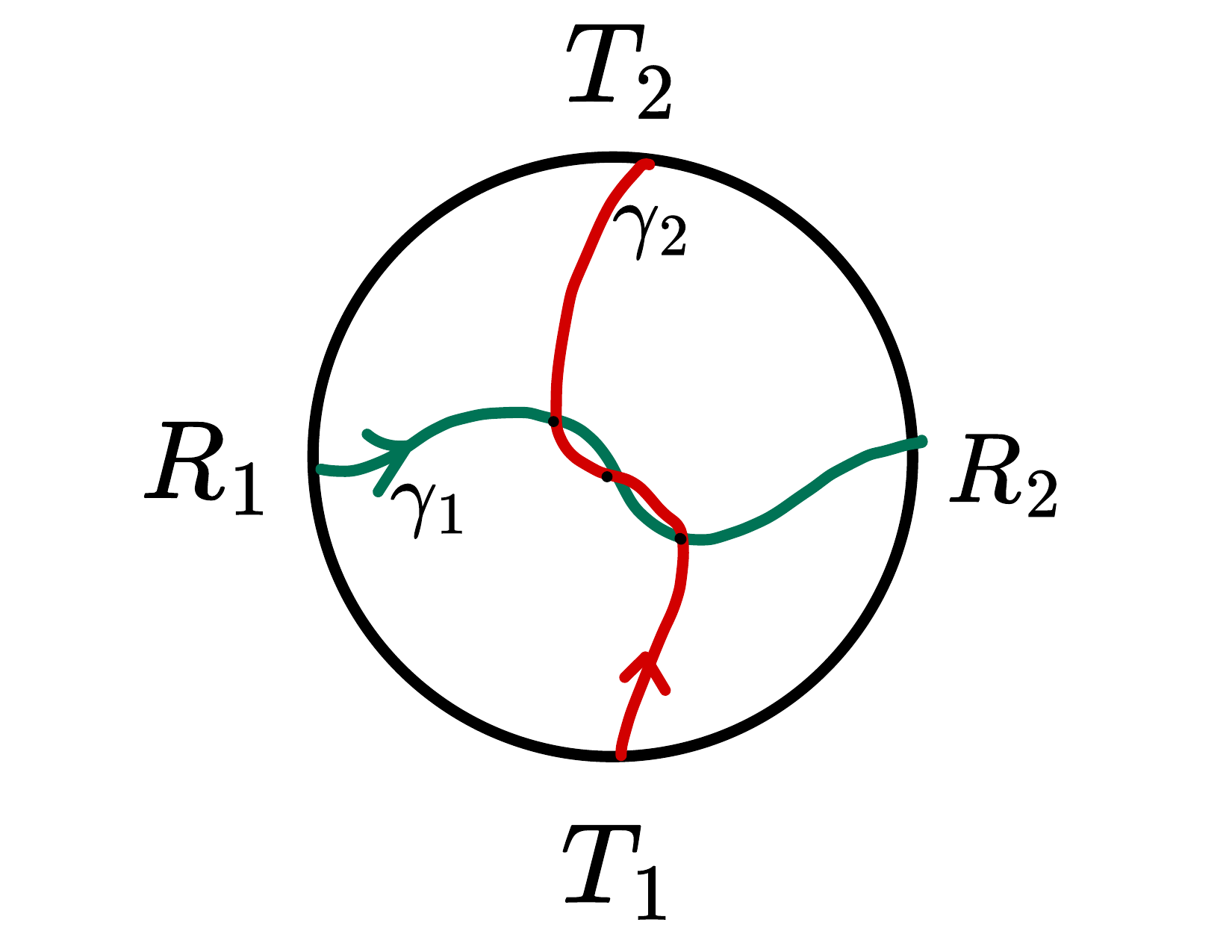}
    \caption{Illustration of signed intersection number of two curves in a disk, with interleaved endpoints. The curve $\gamma_1$ has endpoints $R_1$ and $R_2$ while the curve $\gamma_2$ has endpoints $T_1$ and $T_2$. Each curve is assigned an orientation. At each intersection point, a $\pm$ sign is assigned based on the relative orientation of the tangent vectors of $\gamma_1$ and $\gamma_2$.}
    \label{fig:signed_intersection}
\end{figure}
\begin{lemma}\label{lemma:A2}
    Let $\gamma_1$ and $\gamma_2$ be two simple (non-self-intersecting) curves in a simple disk $\mathbb{D}$. Denote the endpoints of $\gamma_1$ by $R_1$ and $R_2$, i.e. $\gamma_1 \cap \partial \mathbb{D} =\{R_1, R_2\}$. Denote the endpoints of $\gamma_2$ by $T_1$ and $T_2$, i.e. $\gamma_2 \cap \partial \mathbb{D} =\{T_1, T_2\}$. If $R_1, T_1, R_2, T_2$ are of cyclic order along $\partial\mathbb{D}$, then the intersection number of $\gamma_1$ and $\gamma_2$ is an odd integer.
\end{lemma}
\begin{proof}
Here we give an intuitive explanation. One can use the signed intersection number, a concept from algebraic topology, to derive this fact rigorously.

Since $\gamma_2$ is a simple curve (with no self-intersections) and has endpoints on the boundary $\partial \mathbb{D}$, it divides the disk $\mathbb{D}$ into two regions. The cyclic order of the endpoints $R_1, T_1, R_2, T_2$ implies that $\gamma_1$ must start in one region and end in the other. As $\gamma_1$ traverses the disk, it must cross $\gamma_2$ an odd number of times, even if it moves back and forth. The reader may convince themselves that this necessarily results in an odd integer intersection number.
\end{proof}

\begin{figure}
    \centering
    \includegraphics[width=0.6\linewidth]{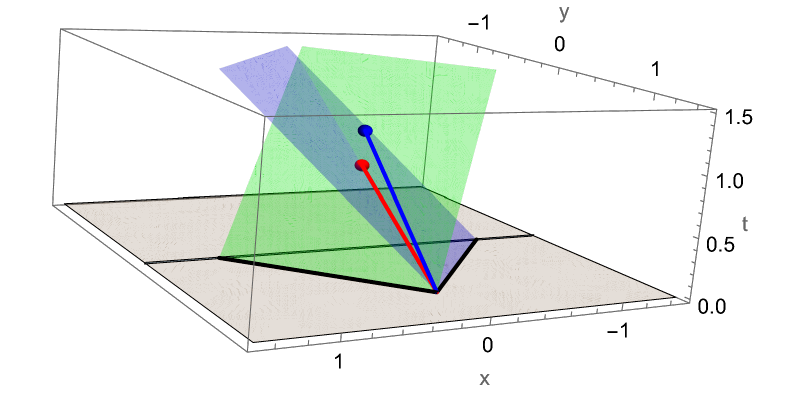}
    \caption{An elementary example in the Minkowski space-time analogous to $\ew(A)\cap \bb=\hat{D}(A)$. The null ray emitted from the vertex (blue) lies strictly to the future of the domain of dependence of the wedge, due to collision of null geodesics emitted from the two edges of the wedge.}
    \label{fig:focus_Minkowski}
\end{figure}
\begin{remark}
    It is well known that the entanglement wedge $\ew(A)$ always contains the causal wedge $\cw(A)$, although their intersections with the timelike boundary $\bb$ are both $\hat{D}(A)$. 
    An elementary example in the Minkowski spacetime may serve as an illustration.

    Consider a wedge in the $t=0$ plane of $R^{2,1}$, as shown in Figure \ref{fig:focus_Minkowski}. The cross-over seam, where light rays emitted from the two edges intersect, is at an angle less than $45^\circ$ with the $t=0$ plane. Correspondingly, the null ray emitted from the vertex toward the $y=0$ plane intersects the $y=0$ plane strictly to the future of the wedge's domain of dependence.
\end{remark}



\bibliographystyle{JHEP}
\bibliography{biblio.bib}

@article{maypenington2020holographic,
  title={Holographic scattering requires a connected entanglement wedge},
  author={May, Alex and Penington, Geoff and Sorce, Jonathan},
  journal={Journal of High Energy Physics},
  volume={2020},
  number={8},
  pages={1--34},
  year={2020},
  publisher={Springer}
}

@article{LL2025superadditivity,
  title={Superadditivity in large {N} field theories and performance of quantum tasks},
  author={Leutheusser, Sam and Liu, Hong},
  journal={Physical Review D},
  volume={112},
  number={4},
  pages={046001},
  year={2025},
  publisher={APS}
}

@article{headrick2014causality,
  title={Causality \& holographic entanglement entropy},
  author={Headrick, Matthew and Hubeny, Veronika E and Lawrence, Albion and Rangamani, Mukund},
  journal={Journal of High Energy Physics},
  volume={2014},
  number={12},
  pages={1--36},
  year={2014},
  publisher={Springer}
}

@article{witten2020light,
  title={Light rays, singularities, and all that},
  author={Witten, Edward},
  journal={Reviews of Modern Physics},
  volume={92},
  number={4},
  pages={045004},
  year={2020},
  publisher={APS}
}

@article{gao2000theorems,
  title={Theorems on gravitational time delay and related issues},
  author={Gao, Sijie and Wald, Robert M},
  journal={Classical and Quantum Gravity},
  volume={17},
  number={24},
  pages={4999},
  year={2000},
  publisher={IOP Publishing}
}

@article{may2019quantumtask,
  title={Quantum tasks in holography},
  author={May, Alex},
  journal={Journal of High Energy Physics},
  volume={2019},
  number={10},
  pages={1--39},
  year={2019},
  publisher={Springer}
}

@article{maldacena1999AdSCFT,
  title={The large-{N} limit of superconformal field theories and supergravity},
  author={Maldacena, Juan},
  journal={International journal of theoretical physics},
  volume={38},
  number={4},
  pages={1113--1133},
  year={1999},
  publisher={Springer}
}

@article{witten1998AdScft,
  title={Anti de {S}itter space and holography},
  author={Witten, Edward},
  journal={arXiv preprint hep-th/9802150},
  year={1998}
}

@article{may2021region,
  title={Holographic quantum tasks with input and output regions},
  author={May, Alex},
  journal={Journal of High Energy Physics},
  volume={2021},
  number={8},
  pages={1--24},
  year={2021},
  publisher={Springer}
}

@article{may2022nton,
  title={The connected wedge theorem and its consequences},
  author={May, Alex and Sorce, Jonathan and Yoshida, Beni},
  journal={Journal of High Energy Physics},
  volume={2022},
  number={11},
  pages={1--65},
  year={2022},
  publisher={Springer}
}

@article{caminiti2025geodesics,
  title={The Geodesics Less Traveled: Nonminimal {RT} Surfaces and Holographic Scattering},
  author={Caminiti, Jacqueline and Lima, Caroline and Myers, Robert C},
  journal={arXiv preprint arXiv:2509.03597},
  year={2025}
}

@article{akers2020quantummaxmin,
  title={Quantum maximin surfaces},
  author={Akers, Chris and Engelhardt, Netta and Penington, Geoff and Usatyuk, Mykhaylo},
  journal={Journal of High Energy Physics},
  volume={2020},
  number={8},
  pages={1--43},
  year={2020},
  publisher={Springer}
}

@article{bousso2016quantumfocusing,
  title={Quantum focusing conjecture},
  author={Bousso, Raphael and Fisher, Zachary and Leichenauer, Stefan and Wall, Aron C},
  journal={Physical Review D},
  volume={93},
  number={6},
  pages={064044},
  year={2016},
  publisher={APS}
}

@article{wall2014maximin,
  title={Maximin surfaces, and the strong subadditivity of the covariant holographic entanglement entropy},
  author={Wall, Aron C},
  journal={Classical and Quantum Gravity},
  volume={31},
  number={22},
  pages={225007},
  year={2014},
  publisher={IOP Publishing}
}

@article{lima2025sufficientGCWT,
  title={On sufficient conditions for holographic scattering},
  author={Lima, Caroline and Pasterski, Sabrina and Waddell, Chris},
  journal={arXiv preprint arXiv:2509.26264},
  year={2025}
}

@article{maldacena2017bulkpointsingularity,
  title={Looking for a bulk point},
  author={Maldacena, Juan and Simmons-Duffin, David and Zhiboedov, Alexander},
  journal={Journal of High Energy Physics},
  volume={2017},
  number={1},
  pages={1--50},
  year={2017},
  publisher={Springer}
}

@book{waldGR,
  title={General relativity},
  author={Wald, Robert M},
  year={2024},
  publisher={University of Chicago press}
}

@article{gary2009bulkonly,
  title={Local bulk {S}-matrix elements and conformal field theory singularities},
  author={Gary, Michael and Giddings, Steven B and Penedones, Joao},
  journal={Physical Review D—Particles, Fields, Gravitation, and Cosmology},
  volume={80},
  number={8},
  pages={085005},
  year={2009},
  publisher={APS}
}

@article{heemskerk2009bulkonly,
  title={Holography from conformal field theory},
  author={Heemskerk, Idse and Penedones, Joao and Polchinski, Joseph and Sully, James},
  journal={Journal of High Energy Physics},
  volume={2009},
  number={10},
  pages={079},
  year={2009},
  publisher={IOP Publishing}
}

@article{penedones2011bulkonly,
  title={Writing {CFT} correlation functions as {A}d{S} scattering amplitudes},
  author={Penedones, Joao},
  journal={Journal of High Energy Physics},
  volume={2011},
  number={3},
  pages={1--34},
  year={2011},
  publisher={Springer}
}

@article{almheiri2015bulk,
  title={Bulk locality and quantum error correction in {A}d{S}/{CFT}},
  author={Almheiri, Ahmed and Dong, Xi and Harlow, Daniel},
  journal={Journal of High Energy Physics},
  volume={2015},
  number={4},
  pages={1--34},
  year={2015},
  publisher={Springer}
}

@article{horowitzmyers,
  title={{A}d{S}-{CFT} correspondence and a new positive energy conjecture for general relativity},
  author={Horowitz, Gary T and Myers, Robert C},
  journal={Physical Review D},
  volume={59},
  number={2},
  pages={026005},
  year={1998},
  publisher={APS}
}


\end{document}